\begin{document}
\title{Optimal Allocations under Strongly Pigou-Dalton Criteria:
 Hidden Layer Structure \& Efficient Combinatorial Approach \thanks{This research is supported by Department of Science and Technology of Guangdong Province (Project No. 2021QN02X239) and Shenzhen Science and Technology Program (Grant No. 202206193000001, 20220817175048002).} } 
\titlerunning{Optimal Allocations under Strongly Pigou-Dalton Criteria}

\author{Taikun Zhu\inst{1}\orcidID{0000-0001-7365-9576} \and
Kai Jin\inst{1}\orcidID{0000-0003-3720-5117} \and
Ruixi Luo\inst{1}\orcidID{0000-0003-0483-0119} \and 
Song Cao\inst{1}\orcidID{0009-0002-1760-3820}} 

\authorrunning{Taikun Zhu et al.}
%
\institute{School of Intelligent Systems Engineering, Shenzhen Campus of Sun Yat-sen University, Shenzhen 518107, China
\email{\{zhutk3,luorx,caos6\}@mail2.sysu.edu.cn,jink8@mail.sysu.edu.cn}}

\maketitle              
\begin{abstract}
We investigate optimal social welfare allocations of $m$ items to $n$ agents with binary additive or submodular valuations. For binary additive valuations, we prove that the set of optimal allocations coincides with the set of so-called \emph{stable allocations}, as long as the employed criterion for evaluating social welfare is strongly Pigou-Dalton (SPD) and symmetric. Many common criteria are SPD and symmetric, such as Nash social welfare, leximax, leximin, Gini index, entropy, and envy sum.  We also design efficient algorithms for finding a stable allocation, including an $O(m^2n)$ time algorithm for the case of indivisible items, and an $O(m^2n^5)$ time one for the case of divisible items. The first is faster than the existing algorithms or has a simpler analysis. The latter is the first combinatorial algorithm for that problem. It utilizes a hidden layer partition of items and agents admitted by all stable allocations, and cleverly reduces the case of divisible items to the case of indivisible items.In addition, we show that the profiles of different optimal allocations have a small Chebyshev distance, which is 0 for the case of divisible items under binary additive valuations, and is at most 1 for the case of indivisible items under binary submodular valuations.

\keywords{optimal social welfare, strongly Pigou-Dalton, binary additive or submodular valuation, combinatorial algorithm, layer partition}
\end{abstract}

\newcommand{\p}{\mathbf{p}}
\newcommand{\q}{\mathbf{q}}

\section{Introduction}
Maximizing social welfare or minimizing inequality in allocating resources to agents is an important topic in social economics and has been studied extensively in recent years \cite{CPB1989,darmann2015maximizing,ATKINSON2015xvii,barman2018greedy}.
 Each agent has her own subjective \emph{valuation function} over subset of resources (items).  
 Yet how to evaluate the welfare or inequality has no unified answer.
 Some may suggest LexiMin as the criterion \cite{kleinberg1999fairness} -- where we maximize the smallest valuations of all agents, then maximize the second smallest valuations of them, and so on; while others may suggest Maximum Nash Welfare (MNW) \cite{barman2018greedy}, where we maximize the product of valuations of all agents.
 For most classes of valuation functions such as the additive valuation functions (where an agent's valuation is the sum of her valuations for individual items in her bundle), the optimal allocations vary with the selected criteria.

 For the special case of \emph{binary additive valuations} on items, however, Aziz and Rey \cite{10.5555/3491440.3491446} showed that the LexiMin allocations are equivalent to the MNW allocations. 
 This result raises a natural question that whether there are more connections among other criteria. 
 Benabbou et al. \cite{10.1145/3485006} gave a positive answer --  the optimal allocations under any \emph{strongly Pigou-Dalton (SPD)} principle (a.k.a. transfer principle) \cite{moulin1991axioms} over all allocations with maximal utilitarian social welfare (USW, defined as the sum of valuations, a promise of efficiency) are consistent with LexiMin allocations, even under \emph{binary submodular valuations} which subsumes binary additive valuations.
 Roughly, the Pigou-Dalton principle requires that if some income is transferred from a rich person to a poor person, while does not bring the rich to a poorer situation than the poor, then the measured inequality should not increase (or decrease in the strong version); see its formal definition in \Cref{sect:prel}. The SPD principle is a principle admitted by most common criteria.


 Inspired by the aforementioned consistency on optimums among all the SPD criteria \cite{10.1145/3485006},
   we conduct a further study of the optimums under the SPD criteria
     mainly for the following three scenarios: 
 
  \begin{description}
   \item[IND.] Indivisible items and agents with binary additive valuations. \smallskip
    \item[DIV.] Divisible items and agents with binary additive valuations. \smallskip
   \item[IND-SUB.] Indivisible items and agents with binary submodular valuations. \smallskip
 \end{description}


 Our main results are summarized below.
 We only concern the allocations with maximal USW to ensure the efficiency like \cite{10.1145/3485006}.  
 Unless otherwise stated, the criteria for evaluating inequality is SPD and symmetric.
 Moreover, a profile of an allocation $\chi$ refers to the valuations vector of the $n$ agents under $\chi$.
 \begin{enumerate} [label=(\alph*)]
     \item For IND, the profiles of optimal allocations have Chebyshev distance at most $1$.   
       Moreover, there is a \emph{layer structure} hidden behind the optimal allocations --
       the agents and items are partitioned into serval layers so that 
          items can only be allocated to the agents within the same layer in all optimal allocations. \smallskip
      
     \item In DIV, the profiles of optimal allocations are all the same (in other words, the Chebyshev distance is $0$).
         Halpern et al. \cite{halpern2020fair} showed a similar result in which only LexiMin and MNW are considered.
        
       A layer structure still exists (and with more layers compared to the scenario of IND).
       More importantly, by utilizing this layer structure and using a reduction to IND, we design the first combinatorial algorithm for finding the optimal allocation for DIV scenario, which runs in $O(m^2n^5)$ time. \smallskip

     \item In IND-SUB, the profiles of optimal allocations have Chebyshev distance at most $1$.
       This extends the corresponding result in IND.
       However, the layer structure mentioned above does not hold under this scenario.
 \end{enumerate}

 \begin{table}[ht]
     \centering
     \begin{tabular}{lllcc}
         \toprule
         Scenario  & \multirow{2}{2cm}{\centering SPD Criteria \\ Consistency} 
         & \multirow{2}{1cm}{\centering Running \\Time}
         & \multirow{2}{1cm}{\centering Cheby. \\ Dist.} 
         & \multirow{2}{1.5cm}{\centering Layer \\ Structure}\\
         &  &   \\    \midrule
         IND     & Yes   & $O(m^2n)$& $\leq 1$   & Yes \\
         DIV       & Yes   &$O(m^2n^5)$ & 0    & Yes \\
         IND-SUB & Yes\cite{10.1145/3485006} & poly\cite{babaioff2021fair} & $\leq 1$  & No\\
         \bottomrule
     \end{tabular}
     \caption{A summary of the results.}\label{tab:summary}
 \end{table}

 See \Cref{tab:summary} for a summary of the results.
 In all cases, we derive an ``almost consistency'' among different optimums (SPD criteria). 
 It states that the valuation of any agent differs by at most $1$ (or $0$ for DIV case), 
   under any two different optimal allocations. 

 \paragraph{Related work.}

 Halpern et al. \cite{halpern2020fair} show that under binary additive valuations, given any \emph{fractional MNW allocation} (i.e. the MNW allocations in the scenario of DIV), one can compute, in polynomial time, a \emph{randomized allocation} with only \emph{deterministic MNW allocations} (i.e. the MNW allocations in the scenario of IND) in its support and the randomized allocation implements the given fractional MNW allocation. 
 This is a compelling connection between the deterministic and fractional MNW allocations: given a fractional MNW allocation, one can find a convex combination of deterministic MNW allocation to yield it.
 We note that the connection found by them is not a computational method for fractional MNW allocation (since they need a fractional MNW allocation as input) and our method finds an optimal allocation of the scenario of DIV with algorithm designed for computing optimal allocations of the scenario of IND.

 For computational tractability, the SPD optimal allocations can be computed in polynomial time (by computing a Leximin or MNW allocation) in the scenario of IND \cite{kleinberg1999fairness,darmann2015maximizing,barman2018greedy}, DIV \cite{nace2002polynomial} and IND-SUB \cite{10.5555/3491440.3491446}. 
 We propose a method to find an optimal allocation of the scenario of DIV 
   with algorithm designed for computing optimal allocations of the scenario of IND. 
 This is reminiscent of the relation between integer programming and linear programming. 
 The well-known branch-and-bound method uses linear programming as a subprogram to solve integer programming problem.
 In this paper, based on nontrivial observation, we split each item into a fixed number of pieces and prove that the optimal allocation over the pieces (viewed as indivisible items) is exactly an optimal allocation of the scenario of DIV.

 The setting of binary valuation is considered in the resource allocation problem, optimal jobs scheduling, load balancing problem. 
 Lin and Li \cite{lin2004parallel} study the special case in which each job can be processed on a subset of allowed machines and its run-time in each of these machines is 1 and find the minimum makespan in polynomial time. 
 Kleinberg et al. \cite{kleinberg1999fairness} study case called uniform load balancing which is to assign jobs to machines so that the set of allocated bandwidth is Leximin. 
 The object equals to find the allocation that the number of jobs assigned to machine is Leximax optimal: lexicographically minimizing the number of jobs assigned to machine when sorted from large to small. 

 Another classic class of resource allocation problems is that with only one kind of resource. 
 In this setting, we only care about the number of items allocated to each agent, rather than the specific subset of items. 
 Ibaraki and Katoh\cite{ibaraki1988resource} make a review on this class of resource allocation problem, viewing as an optimization problem. 
 Allocate a fixed amount of resources (continuous or discrete) to $n$ agents for optimizing the objective function (e.g. separable, convex, minimax, or general). 
 In particular, an important case is the discrete resource allocation problem with a separable convex objective function, where Michaeli and Pollatschek\cite{Michaeli1977OnSN} discusse some properties between the optimal solution of the discrete version and the one of continuous version. These properties can be used to speed up the search for integer solution.

 The resource allocation problems under ternary valuations, which is a natural extension of binary valuations, are much harder to handle. Under additive valuations, Golovin \cite{golovin2005max} proves that it is NP-hard to compute a $(2-\epsilon)$ approximate maximin allocation even with agents' valuations of $\{0,1,2\}$ on single items.
 Another extension of binary valuations is the case where item $j$ has value $p_j$ or $0$ for each agent. In this case of valuations, for computing maximin allocation, Bezakova and Dani\cite{Dani2005santa} pove that there is no approximation alogrithm with performance guarantee better than $2$ unless $P=NP$ and Bansal and Sviridenko\cite{bansal2006santa} present an $O(\frac{\log{\log{m}}}{\log{\log{\log{m}}}})$ approximaition algorithm.

\section{Preliminaries} \label{sect:prel}
For an integer $k>0$, let $[k]$ denote $\{1,...,k\}$. 
Throughout this paper, $[m]$ refers to the set of $m$ items and $[n]$ refers to the set of $n$ agents.

Each agent $i \in [n]$ has a valuation function $v_i: 2^{[m]} \rightarrow{\mathbb{R_+}}$ over subsets of $[m]$ (called bundles) where $v_i(\emptyset) = 0$.
Given a valuation function $v_i$, we define the \emph{marginal gain} of an item $o$ over a bundle $S \in [m]$ as $\Delta_i(S;o) = v_i(S \cup {o}) - v_i(S)$.
We focus on the binary valuations where the marginal gain $\Delta_i(S;o) \in \{0,1\}$. 

Two kinds of binary valuations are discussed frequently in literature, which we call \textbf{0/1-add} and \textbf{0/1-sub}.
For the 0/1-add valuations, the value of a set of items for an agent is the sum of the valuation of the individual items;
the marginal gain $\Delta_i(S;o)$ is based on whether agent $i$ likes item $o$ (and independent of $S$). 
For the 0/1-sub valuations, the marginal gain $\Delta_i(S;o)$ does not increase when $S$ grows; formally, $\Delta_i(T;o) \leq \Delta_i(S;o)$ for $S \subset T \subset [m]$ and $o\in[m]\setminus T$.

Note that the 0/1-sub valuations subsume the 0/1-add ones.

\smallskip
An allocation $\chi$ refers to a collection of disjoint bundles $\chi_1 \dots \chi_n$ such that $\chi_1 \cup \dots \cup \chi_n \subseteq [m]$. 
An allocation $\chi$ is \emph{clean} if all the bundles are clean -- $\chi_i$ is clean if it has no items with zero marginal gain (i.e. for all $o\in\chi_i$, $\Delta_i(\chi_i\setminus\{o\};o)=1$).
For 0/1-sub valuations, $\chi_i$ is clean if and only if $v_i(\chi_i) = |\chi_i|$ (Proposition 3.3 of \cite{10.1145/3485006}).


Given a clean allocation $\chi$, assuming agent $i$ gets $h_i=|\chi_i|$ items under $\chi$,
we call vector $(h_1,\dots,h_n)$ the \emph{profile} of $\chi$, denoted by $\p(\chi)$.
Henceforth, $h_i$ always refers to $|\chi_i|$ unless otherwise stated.

\begin{definition}[Criterion] 
A \emph{criterion of income inequality} (\emph{criterion} for short), a.k.a. \emph{income inequality metric} \cite{ATKINSON2015xvii,CPB1989}, is a function from the profiles to $\mathbf{R}$: Each profile is evaluated by a real number (called \emph{score}); the lower the score, the better the profile under this criterion. 
Following the convention, a criterion must be \emph{symmetric}; i.e.,
it should evaluate all permutations of $\p$ equally.
\end{definition}

\newcommand{\NSW}{\mathsf{NSW}^\neg}
\newcommand{\ES}{\mathsf{EnvySum}}
\newcommand{\GINI}{\mathsf{GiniIndex}}
\newcommand{\entropy}{\mathsf{Entropy}^\neg}
\newcommand{\Congestion}{\mathsf{Congestion}}
\newcommand{\Variance}{\mathsf{Variance}}

\newcommand{\LexiMin}{\mathsf{LexiMin}}
\newcommand{\LexiMax}{\mathsf{LexiMax}}
\newcommand{\Potential}{\mathsf{Potential}}

\begin{example}[Some commonly used criteria]\label{example:criteria}
Let $h^\uparrow_1,\ldots,h^\uparrow_n$ be the permutation of $h_1,\ldots,h_n$, sorted in increasing order.
Take $\Phi(x)$ to be any strictly convex function of $x$. For example, $\Phi(x)=x^2$.
For every profile $\p=(h_1,\ldots,h_n)$, define 
\footnote{For $\NSW$, we first need to maximize the number of agents with nonzero valuation and then maximize the product of the nonzero valuations.}
\begin{flalign*}
\NSW(\p)      &:= -\prod_{h_i>0}{h_i}; &\Potential_\Phi(\p) &:= \sum_{i=1}^{n}\Phi(h_i) \\
\GINI(\p)     &:= \sum_ i i \cdot h^\uparrow _i; &\ES(\p)  &:={\sum}_{h_i<h_j}(h_j-h_i);\\
\Congestion(\p)      &:= \sum_i \binom{h_i}{2};&\entropy(\p)  &:= \sum_i \frac{h_i}{m} \log (\frac{h_i}{m}). \\
\LexiMax(\p)    &:= \sum_i m^{h_i};  &\LexiMin(\p)    &:= \sum_i m^{m-h_i}; 
\end{flalign*}
\end{example}


\begin{remark}
By the last two definitions, 
minimizing $\LexiMax$ is equivalent to minimizing the largest valuation of all agents, then minimizing the second largest valuation of them, and so on;
minimizing $\LexiMin$ is equivalent to maximizing the smallest valuation of all agents, then maximizing the second smallest valuation of them, and so on.
\end{remark}

We abbreviate $f(\p(\chi))$ as $f(\chi)$ for any criterion $f$.

\begin{definition}[Strongly Pigou-Dalton]
Profile $\q=(q_1,\ldots,q_n)$ is regarded \emph{more balanced} than $\p=(p_1,\ldots,p_n)$,
if there are $j,k\in [n]$ such that $p_j<p_k$ and both $q_j,q_k$ lie in $(p_j,p_k)$ and $q_i=p_i$ for $i\in [n]\setminus \{j,k\}$
(namely, the incomes of two agents are more balanced in $\q$ whereas all other incomes remain unchanged).
A criterion $f$ is \emph{strongly Pigou-Dalton} (SPD) \cite{moulin1991axioms} 
if $f(\q)<f(\p)$ whenever $\q$ is more balanced than $\p$.
(SPD principle is also known as \emph{transfer principle}.)
\end{definition}

All criteria shown in \Cref{example:criteria} are SPD (proved in \Cref{sect:spd-verify}).

\smallskip
We only consider the allocation with maximal utilitarian social welfare (max-USW, maximizing the sum of the valuations of all agents), otherwise one may minimize $\LexiMax(\chi)$ ($\GINI(\chi)$ etc.) by not allocating any items which is uninteresting. 

Henceforth, unless otherwise stated, allocations are assumed to be \textbf{max-USW} and \textbf{clean} (to this end, we can drop items with zero marginal gain without changing the profile). 

\newcommand{\SA}{\mathcal{S}}

\section{Indivisible items and agents with 0/1-add valuations}

\begin{definition}[Stable allocations in IND scenario]\label{def:stable}
Take an allocation $\chi$ of indivisible items.
For each item $o$ allocated to agent $i$ that can be reallocated to another agent $i'$ (i.e., $v_{i'}(\{o\})=1$),
  build an edge $(i,i')$. 
Moreover, if there is a simple path $(i_1,\ldots,i_k)$ ($k\geq 2$) along such edges,
  we state that $\chi$ admits a \emph{transfer} from $i_1=u$ to $i_k=v$, denoted by $u\rightarrow v$,
  which consists of $k-1$ reallocations along the path, after which agent $u$ loses and $v$ gains one.

A \emph{narrowing transfer} refers to a transfer $u\rightarrow v$ with $h_u\geq h_v+2$.
A \emph{widening transfer} refers to a transfer $u\rightarrow v$ with $h_u\leq h_v$.
Other transfers (i.e. $u\rightarrow v$ with $h_u=h_v+1$) are called \emph{swapping transfers}.

Allocation $\chi$ is called \emph{nonstable} if it admits a narrowing transfer, and
  is called \emph{stable} otherwise.

Denote by $\SA$ the set of stable allocations.
\end{definition}

\begin{lemma}\label{lemma:crucial-lexmin}
Stable allocations are optimal under $\LexiMin$.
\footnote{In fact, Barman et al. \cite{barman2018greedy} proved that if an allocation is not optimal under $\NSW$, then it admits a narrowing transfer (hence is nonstable). Their result can be easily generalized to any SPD criterion, including $\LexiMin$ (as stated in \Cref{lemma:crucial-lexmin}).}

(As a corollary, their profiles are equivalent under permutation.)
\end{lemma}

 \begin{proof}
 Assume $\chi$ is non-optimal under $\LexiMin$. We shall prove that $\chi$ is non-stable, i.e.,
 it admits a narrowing transfer.

 First, take an allocation $\chi^*$ that is optimal under $\LexiMin$.

 We build a graph $G$ with $n$ vertices.
  If an item is allocated to $j$ in $\chi$ and allocated to $k$ in $\chi^*$, where $k\neq j$,
    build an arc from $j$ to $k$. Note that $G$ may have duplicate arcs.
 Clearly, an arc represents a reallocation (of one item) on $\chi$,
   and $\chi$ becomes $\chi^*$ after all the arcs (i.e. reallocations) are applied.
   
 We decompose $G$ into several cycles $C_1,\ldots,C_a$ and paths $P_1,\ldots,P_b$.
 Denote by $s_i,t_i$ the starting and ending vertices of $P_i$, respectively.
 We assume that $t_j\neq s_k$ for any $j\neq k$; otherwise we connect the two paths $P_j,P_k$ into one path.

 For $0\leq i\leq b$, let $\chi^{(i)}$ be the allocation copied from $\chi$ but applied all the arcs (reallocations) in $P_1,\ldots,P_i$.
 $\chi^{(0)}=\chi$.

 Note that $\chi^{(b)}$ becomes $\chi^*$ after applying the arcs in $C_1,\ldots,C_a$.
 We obtain that $\LexiMin(\chi^{(b)})=\LexiMin(\chi^*)$.
 Further since that $\LexiMin(\chi^*)<\LexiMin(\chi)$, there exists $i~(1\leq i\leq b)$ such that $\LexiMin(\chi^{(i)})<\LexiMin(\chi^{(i-1)})$.

 It follows that in $\chi^{(i-1)}$, we have $h_s\geq h_t+2$ (where $s,t$ denote $s_i,t_i$ respectively, for short).
 It further follows that in $\chi^{(0)}=\chi$, we also have $h_s\geq h_t+2$,
 as $h_s$ never increases and $h_t$ never decreases in the sequence $\chi^{(0)},\ldots,\chi^{(i-1)}$.
 Consequently, $\chi$ admits a narrowing transfer (from $s$ to $t$).
 \end{proof}

\begin{theorem}\label{thm:main}
1. For any SPD criterion, the optimums are exactly $\SA$.

2. We can find a stable allocation in $O(m^2n)$ time.
\end{theorem}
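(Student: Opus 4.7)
The plan is to leverage the SPD principle on profiles together with Lemma~\ref{lemma:crucial-lexmin}. Since that lemma implies all stable allocations share a common profile up to permutation, every symmetric criterion evaluates them identically; call the common score $s^{*}$. It then suffices to show every non-stable max-USW allocation $\chi$ satisfies $f(\chi)>s^{*}$. Given a narrowing transfer $u\to v$ in $\chi$ with $h_u\geq h_v+2$, perform it to obtain $\chi'$: since the path only reallocates items of marginal gain $1$, $\chi'$ is still max-USW and clean, and its profile agrees with that of $\chi$ except at coordinates $u,v$, where the new values $h_u-1$ and $h_v+1$ both lie strictly inside $(h_v,h_u)$. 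By the SPD definition, $f(\chi')<f(\chi)$, and since $\sum h_i^{2}$ strictly decreases, iterating reaches a stable $\chi^{*}$ with $f(\chi^{*})=s^{*}<f(\chi)$. Hence the optima of $f$ coincide with $\SA$.

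\textbf{Part 2.} For the algorithm I would adopt a Hungarian-style, item-by-item procedure. The invariant to maintain is that the current allocation is max-USW and stable with respect to the items processed so far. When a new item $o$ arrives, let $T_o=\{i:v_i(\{o\})=1\}$. If $T_o=\emptyset$, discard $o$. Otherwise run a BFS in the current reallocation graph from the source set $T_o$, find the reachable agent $v^{*}$ of minimum current load $h_{v^{*}}$, and slide $o$ along the discovered path: $o$ enters at the first vertex of $T_o$ on the path, items shift in domino fashion, and a single extra item lands at $v^{*}$, leaving every intermediate load unchanged and raising only $h_{v^{*}}$ by one.

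The main obstacle is proving that each such move preserves stability. Before the update no narrowing transfer exists, and only $v^{*}$'s load changes, so any new narrowing transfer must start at $v^{*}$. The key lemma is a reachability-lifting claim: if $v$ is reachable from $v^{*}$ in the updated reallocation graph, then $v$ was already reachable from $T_o$ in the graph before the update. Granted this, the minimum-load choice of $v^{*}$ gives $h_v\geq h_{v^{*}}$, which blocks $h_{v^{*}}+1\geq h_v+2$. I expect the proof to mirror the alternating-path analysis of the Hungarian algorithm, tracking how the edges along the augmenting path get relabelled by the domino move so that every BFS branch in the new graph admits an old counterpart rooted in $T_o$.

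For the runtime, the reallocation graph has $n$ vertices and $O(mn)$ edges (each assigned item contributes at most $n-1$ outgoing edges, and at most $m$ items are assigned), so each BFS costs $O(mn)$. Performing one BFS per item gives $O(m^{2}n)$ overall; the adjacency-list updates along each augmenting path are proportional to its length and are therefore absorbed. The max-USW invariant is automatic because an augmenting path from $T_o$ exists iff the total matching can grow by one.
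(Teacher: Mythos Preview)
Your Part~1 is correct and matches the paper's argument: a nonstable allocation admits a narrowing transfer and is therefore strictly worse under any SPD $f$; all stable allocations share a profile up to permutation (Lemma~\ref{lemma:crucial-lexmin}) and hence a common $f$-score. Your extra iteration via the potential $\sum h_i^2$ is harmless but unnecessary---once every nonstable allocation is shown non-optimal, the stable ones are automatically the optima.

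For Part~2 you take a genuinely different route. The paper reduces to a minimum-cost flow instance (items on one side, agents on the other, convex costs on the agent--sink edges) and runs the Successive Shortest Path algorithm; correctness is inherited wholesale from SSP, and each of the $m$ augmentations costs $O(mn)$ via one BFS. Your item-by-item scheme is, in effect, SSP restricted to a prescribed source item per round, but you argue correctness directly via a stability invariant rather than by appealing to flow theory. Both hit $O(m^2n)$.

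There is, however, a real gap in your stability argument. You assert that ``only $v^{*}$'s load changes, so any new narrowing transfer must start at $v^{*}$'', but this overlooks that the \emph{reallocation graph itself} changes along the augmenting path $a_0\to a_1\to\cdots\to a_\ell=v^{*}$: each $a_i$ acquires new out-edges (via the shifted item $x_i$, or via $o$ at $a_0$). Hence a narrowing transfer $u\to v$ with $u\neq v^{*}$ can in principle arise after the update along a path that uses one of these new edges, even though $h_u$ and $h_v$ are unchanged. Your reachability-lifting lemma, as stated (from $v^{*}$ only), does not rule this out. The fix is short: if such a path exists, let $a_j$ be the first $a$-vertex it visits. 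The prefix $u\to a_j$ traverses only vertices outside $\{a_0,\ldots,a_\ell\}$, whose out-neighbourhoods are unchanged, so it is also a path in the old graph; concatenating with the old augmenting path $a_j\to\cdots\to v^{*}$ gives $u\to v^{*}$ in the old graph. Moreover the suffix $a_j\to v$ (together with the natural extension of your lifting lemma to any $a_i$, not just $v^{*}$) shows $v\in R_{\mathrm{old}}$, hence $h_v\ge h_{v^{*}}$ and $h_u\ge h_v+2\ge h_{v^{*}}+2$, yielding a narrowing transfer in the old allocation---a contradiction. With this case added, your invariant argument is complete; the paper's flow reduction simply sidesteps the whole issue by outsourcing correctness to SSP.
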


\begin{proof}
1. Fix any SPD criteria $f$, we need to show that
\begin{enumerate}
\item[(1)] A nonstable allocation $\chi$ is non-optimal under $f$.
\item[(2)] A stable allocation $\chi$ is optimal under $f$.
\end{enumerate}
Together, the optimal allocations under $f$ are the stable ones.

\medskip Proof of (1): If $\chi$ is nonstable, it admits a narrowing transfer; 
  denoted by $\chi'$ the allocation after this transfer. 
  Clearly, $\p(\chi')$ is more balanced than $\p(\chi)$, 
    and therefore $f(\chi')<f(\chi)$ due to the assumption that $f$ is strongly Pigou-Dalton.
  
\smallskip Proof of (2): 
Assume $\chi,\chi'$ are stable. By \Cref{lemma:crucial-lexmin}, $\p(\chi)$ is equivalent to $\p(\chi')$ up to permutation,
 therefore $f(\chi)=f(\chi')$ as $f$ is symmetric (remind that we always assume so). So, stable allocations have the same score under $f$.
Further since that nonstable allocations are non-optimal under $f$ (Claim~1),
  all stable allocations admit the same lowest score under $f$ (a.k.a. optimal).

\medskip \noindent
2. Finding a stable allocation reduces to finding the allocation with minimum $\Congestion$ (by Claim~1 of this theorem), which can be found using network flows. \footnote{The algorithm is almost the same as that of Kleinberg et al. \cite{kleinberg1999fairness} for finding a $\LexiMax$ optimal allocation but our analysis is simpler.} Specifically, it reduces to computing the minimum-cost flow in the following network (see \Cref{pic:m^2n}):
  
There are $m+n+2$ nodes, including
    a source node $s$, a sink node $t$, and $m$ nodes $u_1,\ldots,u_m$ representing the items
      and $n$ nodes $v_1,\ldots,v_n$ representing the agents. 
    And there are $\Theta(mn)$ edges in the network: 
(i) an edge from $s$ to each $u_i$, with capacity 1 and cost 0;
(ii) an edge from $u_i$ to $v_j$ if agent $j$ likes item $i$, with capacity 1 and cost 0; 
(iii) $m$ edges from each $v_j$ to $t$, in which the $k$-th one has capacity 1 and cost $k-1$.

\begin{figure}[h]
\centering \includegraphics[width=7.8cm]{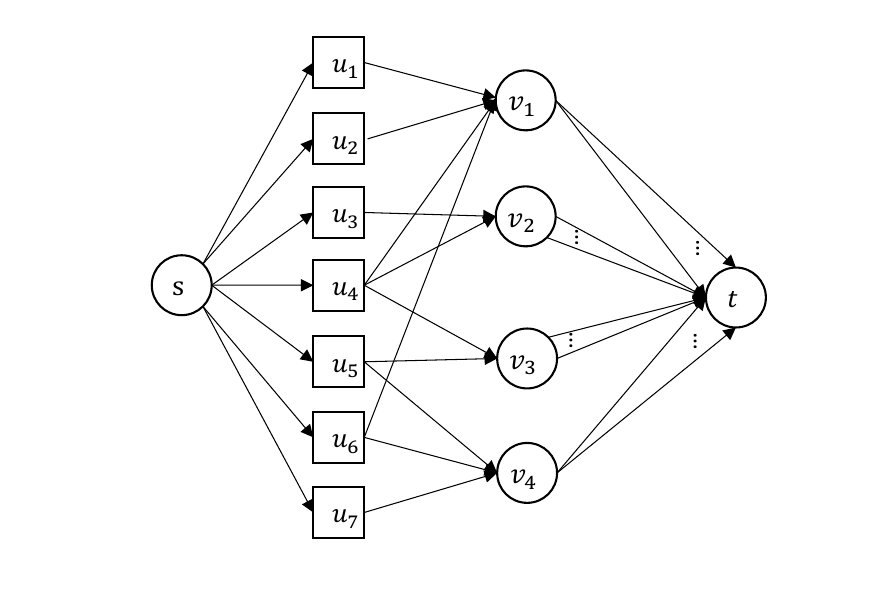}
\caption{Reduction to minimum-cost flow problem.} \label{pic:m^2n}
\end{figure}

Our target -- a flow of size $m$ with the minimum cost --
 can be computed by the Successive Shortest Path algorithm\cite{EK-MCF},
 which increases the size of the current flow by 1
   via augmenting along the shortest path in the residual graph, repeating $m$ times.
For our particular network, finding such a path reduces to 
     finding a non-used edge $(v_j,t)$ with lowest cost 
        such that $s$ can reach $v_j$ in the residual network,
     which can be done in $O(mn)$ time by BFS. In total it runs in $O(m^2n)$ time.
\end{proof}
    
\begin{remark}
It might be asked whether a better allocation under some criterion 
      is also a better allocation under another criterion?
  This answer is no, although the best remains the best crossing different criteria (\Cref{thm:main}).
  For example, for $m=14$ and $n=3$, we have
  
    $\Congestion((0,5,9))=46<47=\Congestion((2,2,10))$, and
    
    $\ES((0,5,9))=18>16=\ES((2,2,10))$.
\end{remark}


\subsection{Layer partition of agents and items}\label{subsect:layer partition}

Recall that the profiles of stable allocations are equivalent up to permutation (see the corollary part in \Cref{lemma:crucial-lexmin}).
Yet the profiles are not unique -- e.g., the number of items $h_i$ allocated to agent $i$ may differ in different stable allocations.
It raises a natural question that to what extent can $\p(\chi)$ differ for different $\chi$ in $\SA$?
  
Our next theorem shows that the difference is negligible. Recall that the Chebyshev distance of two vectors is the maximum difference along any coordinate dimension.

\begin{theorem} \label{thm:cheb}
    For $\chi,\chi'\in \SA$, it holds that $|h_i-h'_i|\leq 1$ for each agent $i\in [n]$, where $(h_1,\ldots,h_n)=\p(\chi)$ and $(h'_1,\ldots,h'_n)=\p(\chi')$.
In other words, the Chebyshev distance $D(\p(\chi),\p(\chi'))\leq 1$.
\end{theorem}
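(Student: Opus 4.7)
The plan is to establish the Chebyshev bound by contradiction: I will assume some agent $i$ satisfies $h_i \geq h'_i + 2$ and construct a narrowing transfer in one of $\chi, \chi'$, contradicting stability. The central object I will introduce is an \emph{exchange multigraph} $H$ on the agent set $[n]$, formed by adding, for every item $o$ whose $\chi$-owner $a$ and $\chi'$-owner $b$ differ, a directed edge $a \to b$ labelled by $o$. Because both allocations are clean and the valuations are binary-additive, $v_a(o) = v_b(o) = 1$, so every edge $a \to b$ of $H$ is simultaneously an edge of the transfer graph of $\chi$, and the reversed edge $b \to a$ is an edge of the transfer graph of $\chi'$. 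Thus any simple directed path in $H$ can be executed as a transfer in $\chi$, and its reversal can be executed as a transfer in $\chi'$; this symmetry is the key to playing the two allocations against each other.

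Next, I will let $R \subseteq [n]$ be the set of agents reachable from $i$ via directed paths in $H$ (including $i$). Since $R$ is closed under successors in $H$, every item whose $\chi$-owner lies in $R$ must also have its $\chi'$-owner in $R$; hence $\bigcup_{v \in R} \chi_v \subseteq \bigcup_{v \in R} \chi'_v$, which gives
\begin{equation*}
\sum_{v \in R} (h_v - h'_v) \;\leq\; 0.
\end{equation*}
I will then invoke stability of $\chi$: if some $v \in R$ had $h_v \leq h_i - 2$, any simple $H$-path $i \to \cdots \to v$ would realize a narrowing transfer in $\chi$, contradicting stability; thus $h_v \geq h_i - 1$ for every $v \in R$. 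Combining this with the contradiction assumption $h_i - h'_i \geq 2$ and the displayed inequality gives
\begin{equation*}
\sum_{v \in R \setminus \{i\}} (h_v - h'_v) \;\leq\; -2,
\end{equation*}
so there exists $j \in R \setminus \{i\}$ with $h'_j > h_j \geq h_i - 1$, equivalently $h'_j \geq h_i$.

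The finishing twist is to reverse any simple $H$-path $i \to \cdots \to j$ to obtain a simple path $j \to \cdots \to i$ in $\overleftarrow{H}$, which therefore lies in the transfer graph of $\chi'$. Because $h'_j \geq h_i \geq h'_i + 2$, this path realizes a narrowing transfer in $\chi'$, violating stability of $\chi'$ and closing the contradiction.

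The main obstacle I anticipate is the asymmetry of a one-sided analysis. Using only the stability of $\chi$ yields the lower bound $h_v \geq h_i - 1$ on $R$, but this alone is consistent with the deficit $h_i - h'_i$ being arbitrarily large; one cannot locate a bad agent merely by walking out from $i$ in $\chi$. The decisive idea is to let the closure identity $\sum_{v \in R}(h_v - h'_v) \leq 0$ convert the assumed deficit at $i$ into a surplus $h'_j - h_j \geq 1$ somewhere inside $R$, and then to exploit that the very same multigraph $H$ yields a reverse path in the transfer graph of $\chi'$, so that the surplus can be turned into a narrowing transfer that attacks the stability of $\chi'$.
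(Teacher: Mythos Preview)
Your proof is correct and takes a genuinely different route from the paper's. Both start from the same exchange multigraph (called $G$ in the paper, $H$ in your write-up): an arc $a\to b$ for each item owned by $a$ in $\chi$ and by $b\neq a$ in $\chi'$. The paper then decomposes this graph into simple paths $P_1,\ldots,P_b$ (arranged so that no end vertex coincides with a start vertex) plus cycles, applies the paths one by one to $\chi$, and invokes \Cref{lemma:crucial-lexmin} to argue that $\LexiMin$ stays constant along the resulting sequence of allocations; from $h_i\ge h'_i+2$ at least two of the paths start at $i$, each must be a swapping transfer in its respective intermediate allocation, and the second target $t_2$ then satisfies $h_{t_2}=h_i-2$, giving a narrowing transfer $i\to t_2$ already in $\chi$. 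Your argument instead looks at the forward-reachable set $R$ from $i$ in $H$: closure of $R$ under $H$-successors yields $\sum_{v\in R}(h_v-h'_v)\le 0$, stability of $\chi$ forces $h_v\ge h_i-1$ for all $v\in R$, and combining these with $h_i-h'_i\ge 2$ locates some $j\in R$ with $h'_j\ge h_i\ge h'_i+2$; reversing a simple $i\to j$ path in $H$ then produces a narrowing transfer $j\to i$ in $\chi'$. Your proof is more self-contained---it never appeals to \Cref{lemma:crucial-lexmin} or to any $\LexiMin$ computation, only to the raw definition of stability---and it exploits the two stability hypotheses symmetrically (stability of $\chi$ along $H$, stability of $\chi'$ along $\overleftarrow{H}$); the paper's version, by contrast, derives the contradiction entirely inside $\chi$ but pays for this by routing through the $\LexiMin$ characterization of stable allocations.
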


 \begin{proof}
 Suppose to the opposite that $h_i\geq h'_i+2$. We shall prove that $\chi$ admits a narrowing transfer and hence is nonstable, which contradicts the assumption $\chi\in \SA$.

 We build a graph $G$ with $n$ vertices.
  If an item is allocated to $j$ in $\chi$ and allocated to $k$ in $\chi'$, where $k\neq j$,
    build an arc from $j$ to $k$. Note that $G$ may have duplicate arcs.
 Clearly, $\chi$ becomes $\chi'$ after all the arcs (i.e. reallocations) are applied.
   
 Decompose $G$ into paths $P_1,\ldots,P_b$ and a few cycles.
 Denote by $s_j,t_j$ the starting and ending vertices of $P_j$, respectively.
 We assume that $t_j\neq s_k$ for $j\neq k$ as in \Cref{lemma:crucial-lexmin}.
 Moreover, assume that $s_1=s_2=i$ (due to $h_i\geq h'_i+2$).

 For $0\leq j\leq b$, let $\chi^{(j)}$ be the allocation copied from $\chi$ but applied all the arcs (reallocations) in $P_1,\ldots,P_j$.
 $\chi^{(0)}=\chi$.

 Because $\p(\chi^{(b)})=\p(\chi')$, we have $\LexiMin(\chi^{b})=\LexiMin(\chi')$.
 We also know $\LexiMin(\chi^{(0)})=\LexiMin(\chi')$ since both $\chi^{(0)},\chi'$ are stable (\Cref{lemma:crucial-lexmin}). Together,
 $\LexiMin(\chi^{(0)})=\LexiMin(\chi^{(b)})$. It further implies that 
 $\LexiMin(\chi^{(0)})=\LexiMin(\chi^{(1)})= \ldots = \LexiMin(\chi^{b})$.
 Otherwise, there exists $j$ such that $\LexiMin(\chi^{(j)})$ < $\LexiMin(\chi^{(j-1)})$,
  which means $\chi^{(j-1)}$ admits a narrowing transfer $s_j\rightarrow t_j$,
 implying that $\chi^{(0)}$ admits a narrowing transfer $s_j\rightarrow t_j$.

 Since $\LexiMin(\chi^{(0)})=\LexiMin(\chi^{(1)})=\LexiMin(\chi^{(2)})$,
   we know $s_1\rightarrow t_1$, i.e. $i\rightarrow t_1$ (recall $s_1=i$), is a swapping transfer of $\chi^{(0)}$,
   whereas $s_2\rightarrow t_2$, i.e. $i\rightarrow t_2$ (recall $s_2=i$), is a swapping transfer of $\chi^{(1)}$.
   It follows that $h_{t_2}=h_{i}-2$,
     therefore $\chi^{(0)}=\chi$ admits a narrowing transfer $i\rightarrow t_2$. 
 \end{proof}

\begin{remark}
We will see at later part of this paper (\Cref{thm:01SUB-Chev}) that for the scenario of IND-SUB, the Chebyshev distance between different optimal allocations remains to be at most 1.
The proof of this generalized result is much more involved and is given in \Cref{sec:IND-SUB}.
\end{remark}

According to Theorem~\ref{thm:cheb},
  the income $h_i$ under different stable allocations $\chi$ does not differ too much for every agent $i$,
  so every solution $\chi$ seems to be acceptable for all of them.
However, many questions regarding stable allocations remain to be settled. For example:
\begin{itemize}
\item[$Q_1$.] Are there common properties of all stable allocations?
\item[$Q_2$.] Can we obtain the range of $h_i$ over all stable allocations? 
\item[$Q_3$.] Is it possible to count efficiently the number of stable allocations?
\item[$Q_4$.] Can we find out the profile (of some stable allocation) that optimizes a specific function of $h_1,\ldots,h_n$?
\end{itemize}
 
In what follows, we introduce a construct, called ``\emph{layer partition}'' of agents and items, with some nice property (see \Cref{lemma:layer-partition} below). This construct elucidates the structure of $\SA$ and, as we will demonstrate, provides a pathway to answering the four questions above. 
 
\newcommand{\layer}{\mathsf{layer}}
\newcommand{\Layer}{\mathsf{Layer}}

\medskip First of all, we point out that each agent $i$ falls into two cases:
(1) its income $h_i$ is always equal to some constant $d$ for all $\chi\in \SA$;
or (2) its income $h_i$ equals $d$ for some $\chi\in \SA$ 
  and equals $d-1$ for some other $\chi\in \SA$, for some integer $d>0$.
  This is a corollary of \Cref{thm:cheb}.
Denote by $\layer_d$ the set of agents $i$ for which $h_i$ always equal to $d$ (for $\chi\in \SA$),
  and by $\layer^-_d$ the set of agents $i$ for which $h_i$ sometimes equal $d$ and sometimes equal $d-1$ (for $\chi\in \SA$). 

\begin{lemma}\label{lemma:layer-partition}
    The set of items allocated to $\layer_d$ ($d\geq 0$) is invariant under every stable allocation (denote it by $\Layer_d$ henceforth),
    and the set of items allocated to $\layer^-_d$ (for $d>0$) is invariant as well
       (denote it by $\Layer^-_d$ henceforth).
 Equivalently speaking, corresponding to the partition of agent $\layer_0$, $\layer_1$, $\layer^-_1$, $\layer^-_2$, $\ldots$, there exists a partition of items $\Layer_0$, $\Layer_1$, $\Layer^-_1$, $\Layer^-_2$, $\ldots$, so that under each stable allocation $\chi\in \SA$, the items from $\Layer_d$ (or $\Layer^-_d$, respectively) are always allocated to the agents from $\layer_d$ (or $\layer^-_d$, respectively).
\end{lemma}

Shortly, items in any given layer can only be allocated to agents in that layer.
To be clear, we regard $\layer_d,\Layer_d$ as in the same layer and $\layer^-_d,\Layer^-_d$ as in another layer.

\subsection{Proof of \Cref{lemma:layer-partition} and the explicit construction of the layer partition}

We prove \Cref{lemma:layer-partition} and show how to compute
$\layer_d,\layer^-_d,\Layer_d,\Layer^-_d$ efficiently in this subsection. 
To this end, we introduce some additional notation first. 

Fix any stable allocation $\chi\in \SA$. 
Recall the \emph{transfers} on $\chi$ (and the notation $\rightarrow$) in \Cref{def:stable}.
Those agents receiving $d$ items under $\chi$ can be classified into three groups:
\begin{eqnarray}
 p_d &:=& \{i \mid h_i=d \text{ and } \exists i\rightarrow j\text{ with }h_j=d-1\}; \label{eqn:def_pd}\\
 q_d &:=& \{i \mid h_i=d \text{ and } \exists k\rightarrow i\text{ with }h_k=d+1\}; \label{eqn:def_qd}\\
 r_d &:=& \{i \mid h_i=d, i\notin p_d, \text{ and }i \notin q_d\}.      \label{eqn:def_rd}
\end{eqnarray}
See \Cref{pic:layers-proof}. Observe that $p_d$ is disjoint with $q_d$. Otherwise, there is $k\rightarrow j$ with $h_k=h_j+2$, contradicting the fact that $\chi$ is stable. Denote $s_d=p_d\cup q_{d-1}$. 
Be aware that the quantity $h_i$, the groups $p_d,q_d,r_d$, and the set $s_d$ all depends on the selected stable allocation $\chi$.

\begin{figure}[h]
  \centering
  \includegraphics[width=.5\columnwidth]{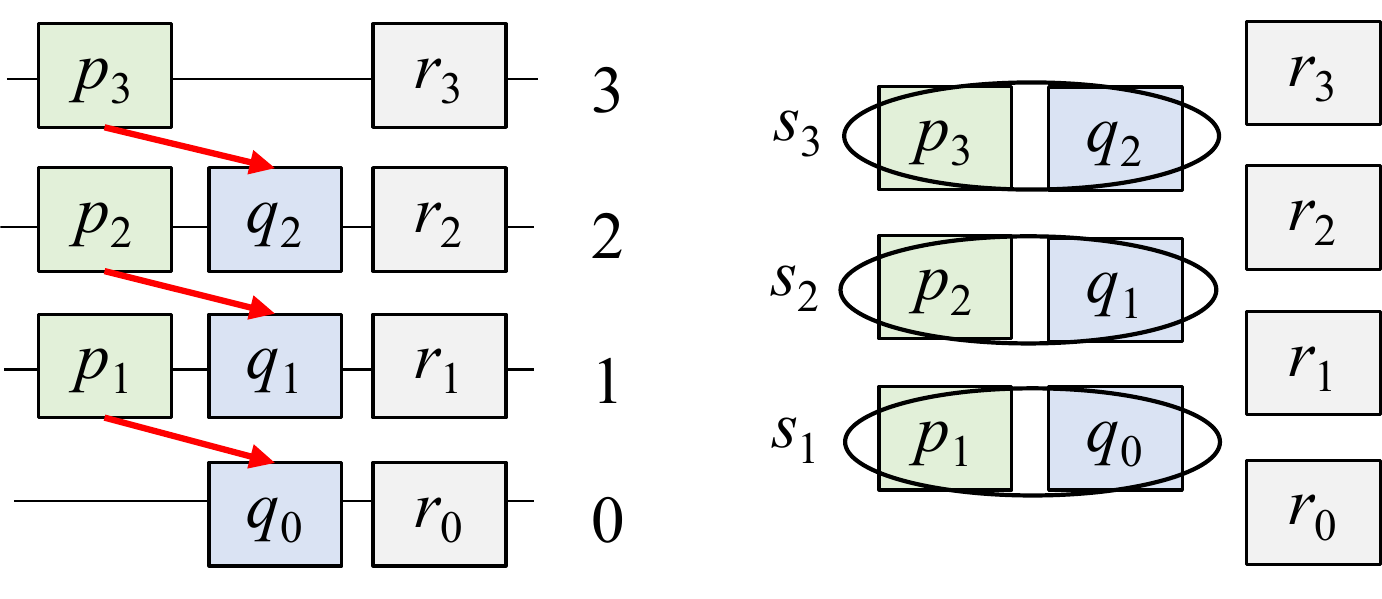}\quad
  \caption{Illustration of $p_d,q_d,r_d,s_d$.}\label{pic:layers-proof}
\end{figure}

Denote by $R_d$ and $S_d$ the set of items allocated to $r_d$ and $s_d$ under $\chi$, respectively.
  
Moreover, within this subsection, we rank all items and agents as follows.
\begin{itemize}
\item elements in $r_0,R_0$ have rank $0$; \quad elements in $s_1,S_1$ have rank $1$; 
\item elements in $r_1,R_1$ have rank $2$; \quad elements in $s_2,S_2$ have rank $3$; and etc.
\end{itemize}

\begin{proposition}\label{prop:layer-agents}
In the aforementioned stable allocation $\chi$,
\begin{enumerate}
\item there is no transfer from $s_d$ to any lower-ranked agents (namely, $r_{d-1},s_{d-1},\ldots$);
\item there is no transfer from $r_d$ to any lower-ranked agents (namely, $s_d,r_{d-1},\ldots$);
\end{enumerate}
\end{proposition}

This easily follows from the definitions of $p_d,q_d,r_d$. Trivial proofs omitted.

In the following, the rank refers to the rank defined above with respect to the fixed $\chi$.

\begin{proposition}\label{prop:layer-items} 
In any stable allocation $\chi'$, it holds that
\begin{enumerate}
\item a higher-ranked item cannot be allocated to a lower-ranked agent.
\item a lower-ranked item cannot be allocated to a higher-ranked agent.
\item Together, items in $R_d$ are always assigned to $r_d$, and items in $S_d$ are always assigned to $s_d$.
\end{enumerate}
\end{proposition}

\begin{proof}
1. An item $o$ from $S_d$ (or $R_d$, respectively) is not attractive to any lower-ranked agent,
  since otherwise $\chi$ admits a transfer from $s_d$ (or $r_d$, respectively) to a lower-ranked agent (using just one reallocation of $o$), which contradicts \Cref{prop:layer-agents}.
  As a result, an item from $S_d$ (or $R_d$, respectively) cannot be allocated to any lower-ranked agent in $\chi'$.

\medskip 2. We prove it by induction.
In this proof, for the simplicity of presentation, assume $\max\{h_i\}=3$; see \Cref{pic:layers-proof}.
First, the items ranked lower than $r_3$ cannot be allocated to $r_3$ in $\chi'$.
Otherwise, as the items in $R_3$ must be allocated to $r_3$ (by the analysis above),
  some agent in $r_3$ receives more than 3 items by the pigeonhole principle,
    which implies that $\p(\chi')$ is non-optimal under $\LexiMax$, and therefore $\chi'$ is not stable (\Cref{thm:main}).
Similarly, the items ranked lower than $s_3$ cannot be allocated to $s_3$ in $\chi'$, otherwise $\p(\chi')$ contains more agents receiving $3$ items than $\p(\chi)$ which is impossible. So on and so forth.
\end{proof}


\begin{proposition}
For $i\in r_d$, it holds that $h_i$ always equal $d$ for all $\chi\in \SA$;
    and for $i\in s_d$, it holds that $h_i$ can be $d$ or $d-1$ under different $\chi\in \SA$.
It follows that $\layer_d=r_d$ and $\layer^-_d=s_d$. 
\end{proposition}

\begin{proof}
    According to \Cref{prop:layer-items} (claim 3), in every stable allocation,
     agents $r_d$ will receive and only receive $R_d$. Note that $|R_d| = d |r_d|$.
   Therefore, in every stable allocation,
     each agent from $r_d$ receives exactly $d$ items (otherwise, there must be one agent receiving more than $d$ and one agent receiving less than $d$,
        which is worse than equally distribution for the criteria $\LexiMin$ and hence nonstable).
     Hence, $h_i$ always equals $d$ for $i\in r_d$. 
             
   For each $i\in p_d$, we know $h_i=d$ and $h_i$ can be reduced to $d-1$ (due to (\ref{eqn:def_pd})) in another stable allocation.
   For each $i\in q_{d-1}$, we know $h_i=d-1$ and $h_i$ can be increased to $d$ (due to (\ref{eqn:def_qd})) in another stable allocation.
   Therefore $h_i$ can be $d-1$ or $d$ for $i\in p_d\cup q_{d-1}=s_d$.
\end{proof}

\Cref{lemma:layer-partition} simply follows from all the analysis above.
In particular, combining the last two propositions, we obtain that $\Layer_d=R_d$ and $\Layer^-_d=S_d$.

\subsection{Some applications of the layer partition -- answering $Q_1,\ldots,Q_4$}\label{sect:layer-answer-Q}

With the layer partition of agents and items, we can promptly answer $Q_1$ to $Q_4$.

\begin{description}
\item[Answer of $Q_1$.] They all allocate items layer by layer; in which the layers are invariant with respect to the chosen allocation.
\smallskip 
\item[Answer of $Q_2$.] For $i\in \layer_d$, the range of $h_i$ is $[d,d]$.
        For $i\in \layer^-_d$, the range of $h_i$ is $[d-1,d]$.
        Computing the layers reduces to computing $r_d,s_d$ for any fixed $\chi$, which is easy.
        Hence we can compute the ranges of $h_i$'s efficiently.
        
        Alternative methods for computing the ranges have to use network flow and are far more complicated.
\smallskip 
\item[Answer of $Q_3$.] Counting stable allocations reduces to counting stable allocations within each layer (and then applying the rule of product).
    
     However, this counting problem is \#P-hard, as the problem of counting stable allocation in $\layer_1$ can be reduced to counting perfect matchings which is already \#P-hard.
\smallskip 
\item[Answer of $Q_4$.] Finding a profile (of a stable allocation) that 
   minimizes a linear function of $h_1,\ldots,h_n$, e.g. $\sum c_ih_i$, is easy.
   Modify the network used in the proof of \Cref{thm:main}~Claim~2 as follows:
      Among the edges from $v_j$ to $t$, change the cost of the $k$-th one to be $(k-1)\times A + c_i$,
       where $A$ is a large enough constant.
   Then, the minimum-cost flow still has the minimum $\Congestion$, and it optimizes $\sum c_ih_i$.
   For non-linear functions, it is more difficult. Yet the layer partition still helps break down the task.
\end{description}

\begin{remark}
    Halpern et al. \cite{halpern2020fair} implicitly found a structure similar to our layer structures. 
Specifically, (in the proof of their Theorem 4) given a fractional MNW allocation, they partition the agents into subsets according to the floor of valuations (so in a certain set, the valuation range of each agent is $[x,x+1)$ for some integer $x$).
They imply that the agents in each set must be fully allocated to a certain subsets of items for any fractional MNW allocation, 
and the partition and correspondence between the subsets of agents and the subsets of items also hold for deterministic MNW allocations. We discover our layer structures independently. 
Our layer structures are more specific, which can be briefly explained through the valuation range of each agent in a certain layer: for the scenario of DIV, compared to their range of $[x,x+1)$ for some integer $x$, the range in our layer structures is some fixed rational number and for the scenario of IND, compared to their range of $[x,x+1]$ for some integer $x$, the range in our layer structures is $[x,x+1]$ for some integer $x$ or some fixed integer (the range of $[x,x]$). 
Additionally, to compute the layer partition for deterministic MNW allocations, a fractional MNW allocation is necessary through their framework, while we can compute it directly in the scenario of IND (from a deterministic MNW allocation in their term).
\end{remark}

\section{Divisible items and agents with 0/1-add valuations}
 This section discusses the scenario of DIV, i.e., the case of divisible items in which we are allowed to allocate a part of an item to an agent
   (and different parts to different agents perhaps).
 Note that the 0/1-sub valuations cannot match easily with divisible items.
  Therefore we restrict ourselves to 0/1-add valuations in this section.

 \begin{definition}[Stable allocations in scenario of DIV]\label{def:stable-divisible}
 Let $\chi$ be an allocation of divisible items.
 For each part of the item $o$ allocated to agent $i$ that can be reallocated to another agent $i'$ (i.e., $v_i(\{o\})=1$),
 build an edge $(i,i')$. 
 Moreover, if there is a simple path $(i_1,\ldots,i_k)$ ($k\geq 2$) along such edges,
 we state that $\chi$ admits a \emph{transfer} from $i_1=u$ to $i_k=v$, denoted by $u\rightarrow v$.
 It consists of $k-1$ reallocations with $\Delta>0$ fraction of items along the path,
   after which $h_u$ decreases by $\Delta$ and $h_v$ increases by $\Delta$.

 A \emph{narrowing transfer} refers to a transfer from $u$ to $v$ with $h_u-\Delta \geq h_v+\Delta$.
 A \emph{widening transfer} refers to a transfer from $u$ to $v$ with $h_u\leq h_v$.
 An Allocation $\chi$ is called \emph{nonstable} if it admits some narrowing transfer
   and is \emph{stable} otherwise.

 Denote by $\SA^*$ the set of stable allocations in DIV scenario. 
 \end{definition}

 \begin{lemma}\label{lemma:frac-crucial-lexmin}
 Stable allocations are optimal under $\LexiMin$.
 (As a corollary, their profiles are equivalent up to permutation.)
 \end{lemma}

 \begin{proof}
 	Assume $\chi$ is non-optimal under $\LexiMin$. We shall prove that $\chi$ is non-stable, i.e.,
 	it admits a narrowing transfer.
	
 	First, take an allocation $\chi^*$ that is optimal under $\LexiMin$.
	
 We build a graph $G$ with $n$ vertices.
  Be aware that according to $\chi$ and $\chi^*$, 
    each item $i$ can be divided into several pieces $i_1,\ldots,i_p$ (with total size $1$),
    so that each piece is given to a certain agent (denoted by $j$) in $\chi$ and given to a certain agent (denoted by $k$) in $\chi'$.
  If $j\neq k$, build an arc from $j$ to $k$, with weight equal to the size of this piece.
 Clearly, an arc represents a reallocation (of one piece) on $\chi$,
   and $\chi$ becomes $\chi^*$ after all the arcs (i.e. reallocations) are applied.

 We decompose graph $G$ into several cycles $C_1,\ldots,C_a$ and paths $P_1,\ldots,P_b$, where
   the edges in any cycle or path have the same weight, and where $t_j\neq s_k$ for $j\neq k$, where $s_i,t_i$ denote the starting and ending vertices of $P_i$, respectively.
   Such a decomposition exists under appropriate division of items.
	
 	For $0\leq i\leq b$, let $\chi^{(i)}$ be the allocation copied from $\chi$ but applied all the arcs (reallocations) in $P_1,\ldots,P_i$.
 	$\chi^{(0)}=\chi$.
	
 	Be aware that $\chi^{(b)}$ becomes $\chi^*$ after applying the arcs in $C_1,\ldots,C_a$.
 	We obtain that $\LexiMin(\chi^{(b)})=\LexiMin(\chi^*)$.
 	Further since that $\LexiMin(\chi^*)<\LexiMin(\chi)$, there exists $i~(1\leq i\leq b)$ such that $\LexiMin(\chi^{(i)})<\LexiMin(\chi^{(i-1)})$.
	
 	It follows that in $\chi^{(i-1)}$, we have $h_s> h_t$ (where $s,t$ denote $s_i,t_i$ respectively, for short).
 	It further follows that in $\chi^{(0)}=\chi$, we also have $h_s> h_t$,
 	since $h_s$ never increases and $h_t$ never decreases in the sequence $\chi^{(0)},\ldots,\chi^{(i-1)}$.
 	Consequently, $\chi$ admits a narrowing transfer (from $s$ to $t$). 
 \end{proof}

 \begin{theorem}\label{thm:main-frac}
 1. For any SPD criterion, the optimums coincide with $\SA^*$. \\
 2. We can find a stable allocation in polynomial time.
 \end{theorem}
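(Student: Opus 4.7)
My plan is to treat the two claims separately: Claim~1 by repeating verbatim the two-step template of Theorem~\ref{thm:main}~Claim~1 (with Lemma~\ref{lemma:frac-crucial-lexmin} replacing Lemma~\ref{lemma:crucial-lexmin}), and Claim~2 by a combinatorial reduction that splits every item into equal pieces and invokes the $O(m^2n)$ algorithm of Theorem~\ref{thm:main}~Claim~2 on the resulting IND instance.

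For Claim~1, fix any SPD symmetric criterion $f$. If $\chi \notin \SA^*$, executing the guaranteed narrowing transfer of size $\Delta > 0$ produces an allocation $\chi'$ whose profile differs from $\p(\chi)$ only at the source $u$ and the sink $v$; the new entries $h_u - \Delta$ and $h_v + \Delta$ both lie strictly inside $(h_v, h_u)$ because $\Delta > 0$ and $h_u - \Delta \ge h_v + \Delta$, so $\p(\chi')$ is more balanced than $\p(\chi)$ and strict SPD gives $f(\chi') < f(\chi)$, ruling $\chi$ out of the optimum set. Conversely, Lemma~\ref{lemma:frac-crucial-lexmin} guarantees that every pair $\chi, \chi' \in \SA^*$ has profiles equal up to permutation, so the symmetry of $f$ forces $f(\chi) = f(\chi')$ throughout $\SA^*$; combined with the previous step, this common value must be the global minimum of $f$, i.e.\ the optima coincide with $\SA^*$.

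For Claim~2, the plan is to choose a positive integer $K$ (polynomial in $n$), split every divisible item into $K$ equal indivisible pieces that inherit the parent item's binary valuation, feed the resulting $mK$-piece instance into the $O(m^2n)$ algorithm of Theorem~\ref{thm:main}~Claim~2 to obtain an IND-stable allocation $\tilde\chi$, and reassemble it by giving agent $i$ the fraction $\frac{1}{K}\,|\{\text{pieces of $o$ assigned to $i$ by }\tilde\chi\}|$ of each item $o$. The main obstacle, and the technical heart of the argument, is showing that this reassembled $\chi$ actually lies in $\SA^*$ while keeping $K$ modest. My plan has two ingredients. First, establish a DIV-analogue of the layer partition of Section~\ref{subsect:layer partition} that I expect to show agents split into layers of size at most $n$ sharing their layer income equally, so $\SA^*$ intersects the grid $\frac{1}{K}\mathbb{Z}^n$ as long as $K$ is a common multiple of the possible layer sizes. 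Second, prove a conversion lemma asserting that any narrowing DIV-transfer of size $\Delta$ on $\chi$ decomposes into single-piece reallocations along the same augmenting path in the piece graph, and that the first $\lceil K\Delta\rceil$ of them already form a narrowing IND-transfer on $\tilde\chi$, contradicting its IND-stability. Plugging $mK$ pieces into the $O((\cdot)^2 n)$ IND bound then yields the claimed polynomial-time algorithm.
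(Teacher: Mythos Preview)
Your proof of Claim~1 is correct and follows exactly the same template as the paper (which explicitly says the argument is a copy of \Cref{thm:main}~Claim~1 with \Cref{lemma:frac-crucial-lexmin} in place of \Cref{lemma:crucial-lexmin}).

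Your plan for Claim~2 has a genuine gap in the ``conversion lemma''. Suppose the IND-stable piece allocation $\tilde\chi$ gives agents $u,v$ exactly $\text{pieces}_u=\text{pieces}_v+1$ pieces along a transfer path $u\rightarrow v$ (this is a \emph{swapping} transfer in IND, perfectly compatible with IND-stability). After reassembly, $h_u-h_v=1/K>0$, so the reassembled $\chi$ admits a DIV narrowing transfer with $\Delta=1/(2K)$. Your lemma would then need the first $\lceil K\Delta\rceil=1$ piece move to be a narrowing IND-transfer on $\tilde\chi$, but it is not: moving one piece from $u$ to $v$ when $\text{pieces}_u=\text{pieces}_v+1$ is only a swap. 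Hence IND-stability of $\tilde\chi$ does \emph{not} imply DIV-stability of the reassembled $\chi$, and the argument breaks precisely at the swapping layers $\layer^-_d$.

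A secondary issue is the choice of $K$. The route you sketch in the first ingredient---take $K$ a common multiple of all possible layer sizes so that the true DIV profile lies on the $1/K$ grid---does succeed (one can show the DIV-stable piece allocation then has every agent in a fixed IND layer $r_d$, forcing all IND-stable piece allocations to share its profile), but it forces $K=\mathrm{lcm}(1,\dots,n)\approx e^{n}$, contradicting ``$K$ polynomial in $n$''. The paper avoids both pitfalls: its proof of Claim~2 proper is the LP-based reduction to fair multi-flow (\Cref{subsect:alg-stable-div}); its combinatorial algorithm takes $K=2n^2$, but does \emph{not} claim the reassembled allocation is directly stable. Instead it uses the separation bound $|d-d'|\ge 1/n^2$ between distinct DIV-layer indices (\Cref{lemma:frac-distance}) to read off the true DIV layers from the IND layers of the piece solution, and then explicitly \emph{reallocates} within each layer to equalize incomes (\Cref{lemma:reallocation}). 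That final reallocation step is exactly what is missing from your plan.
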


 Just like we prove \Cref{thm:main}~Claim~1 using \Cref{lemma:crucial-lexmin},
  we can prove \Cref{thm:main-frac}~Claim~1 using \Cref{lemma:frac-crucial-lexmin}  (the proof is omitted).

 To find a stable allocation in polynomial time,
  we can use an approach based on linear programming (LP) which is shown in \Cref{subsect:alg-stable-div} (thus we prove \Cref{thm:main-frac}~Claim~2).
 Alternatively, we find a purely combinatorial approach (which is more efficient and more interesting) for finding such an allocation. It utilizes the layer partition with several nontrivial ideas. See details in \Cref{subsect:comb}.

 \begin{theorem} \label{thm:chev-divisible}
     For $\chi,\chi'\in \SA^*$, it holds that $\p(\chi)=\p(\chi')$, namely, the Chebyshev distance $D(\p(\chi),\p(\chi')) = 0$.
 \end{theorem}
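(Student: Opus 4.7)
The plan is to argue by contradiction, exploiting the ``difference graph'' of two stable allocations in the same spirit as the proof of Lemma~\ref{lemma:frac-crucial-lexmin}. Suppose $\chi,\chi'\in\SA^*$ with $\p(\chi)\ne\p(\chi')$. By Lemma~\ref{lemma:frac-crucial-lexmin} their profiles already coincide as multisets, so there must be agents $s,t$ with $h_s>h_s'$ and $h_t<h_t'$ (where unprimed/primed quantities refer to $\chi$/$\chi'$).

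Next I would build a weighted digraph $G$ on $[n]$: whenever a positive fraction $w$ of some item $o$ is allocated to $i$ in $\chi$ but to $i'\ne i$ in $\chi'$, insert an arc $i\to i'$ of weight $w$. Then the net outflow at every vertex $i$ equals $h_i-h_i'$. After possibly refining the item subdivisions (exactly as in Lemma~\ref{lemma:frac-crucial-lexmin}), decompose $G$ into simple paths and simple cycles of uniform arc-weight. Because some vertex has strictly positive and some has strictly negative net outflow, the decomposition must contain at least one simple path $P:s=i_1\to i_2\to\cdots\to i_k=t$ with $h_s>h_s'$ and $h_t<h_t'$. Each arc $i_j\to i_{j+1}$ corresponds to an item $o$ that $i_{j+1}$ receives a positive fraction of in the clean allocation $\chi'$, which forces $v_{i_{j+1}}(\{o\})=1$; hence $P$ is a valid transfer path from $s$ to $t$ in $\chi$, and reading $P$ backwards (using cleanness of $\chi$) gives a valid transfer path from $t$ to $s$ in $\chi'$.

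The core step is the arithmetic identity
\[
(h_s-h_t)+(h_t'-h_s')\;=\;(h_s-h_s')+(h_t'-h_t)\;>\;0,
\]
which forces at least one of $h_s>h_t$ and $h_t'>h_s'$ to hold strictly. In the first case, transferring a sufficiently small amount $\Delta\le\min\{(h_s-h_t)/2,\min_j w_j\}$ along $P$ in $\chi$ yields $h_s-\Delta\ge h_t+\Delta$, so this is a narrowing transfer and contradicts $\chi\in\SA^*$. In the second case, the symmetric transfer of a comparably small size along $P$ reversed in $\chi'$ is narrowing, contradicting $\chi'\in\SA^*$. Either way we reach a contradiction, so $\p(\chi)=\p(\chi')$.

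The main obstacle is arranging the flow decomposition so that each extracted path is simple and carries one common positive weight along every arc, so that it can be realized as a genuine $\Delta$-transfer in the sense of Definition~\ref{def:stable-divisible}. Fortunately this is precisely what the item-refinement step inside the proof of Lemma~\ref{lemma:frac-crucial-lexmin} already provides, so it can be invoked as a black box rather than re-derived. Once that is available, the remaining arithmetic and the cleanness-based verification of ``valid transfer edges'' are routine.
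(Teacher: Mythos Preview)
Your argument is correct. The difference-graph decomposition gives a simple path $P$ from some $s$ with $h_s>h_s'$ to some $t$ with $h_t<h_t'$; cleanness of $\chi'$ (resp.\ $\chi$) certifies that each forward (resp.\ backward) arc of $P$ is a legitimate transfer edge in $\chi$ (resp.\ $\chi'$); and the identity $(h_s-h_t)+(h_t'-h_s')=(h_s-h_s')+(h_t'-h_t)>0$ forces a narrowing transfer in one of the two allocations for a sufficiently small $\Delta>0$. All the bookkeeping (uniform weights along the path, $\Delta\le\min\{(h_s-h_t)/2,w\}$, simplicity of $P$) is handled exactly as in the proof of Lemma~\ref{lemma:frac-crucial-lexmin}, which you invoke.

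This is, however, \emph{not} the route the paper takes. The paper's proof of Theorem~\ref{thm:chev-divisible} is a one-liner: it observes that stable allocations correspond to fair multi-flows in the network of \Cref{pic:multi-flow}, and then cites the result of \cite{nace2002polynomial} that the optimal flow vector $(|f_1|,\ldots,|f_n|)$ in the fair multi-flow problem is unique. The paper does remark that ``an alternative proof is similar to the proof of \Cref{thm:cheb} and is omitted''; your argument is precisely such an alternative. Compared to the omitted alternative (which, by analogy with \Cref{thm:cheb}, would presumably trace two paths out of a single overloaded agent), your symmetric two-allocation split via the displayed identity is a cleaner way to exploit divisibility: one inequality suffices, and you never need to juggle two paths simultaneously. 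The trade-off is that the paper's LP-based proof is essentially free once the reduction to fair multi-flow is in place, whereas yours is self-contained and does not depend on an external uniqueness theorem.
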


 A proof of \Cref{thm:chev-divisible} is related to the LP approach mentioned above and is given in \Cref{subsect:alg-stable-div}. An alternative proof is similar to the proof of \Cref{thm:cheb} and is omitted.
  
 \subsection{Layer partition of agents and items for divisible items
    \& a combinatorial algorithm for finding a stable allocation}\label{subsect:comb}

 In the following, we extend the layer partition given in \Cref{subsect:layer partition} to the divisible case
   and then present the aforementioned combinatorial algorithm.

 According to \Cref{thm:chev-divisible}, profile $\p(\chi)$ is unique for $\chi\in \SA^*$.
 In other words, the income $h_i$ of each agent $i$ is independent of $\chi$, as long as $\chi$ is stable.

 For each real number $d$, denote by $\layer^*_d$ the set of agents that always receive $d$ items no matter in which stable allocation; formally, $\layer^*_d=\{i \mid h_i=d\}$.
 
 \begin{lemma}\label{lemma:layer_integer}
 The set of items allocated to $\layer^*_d$ is invariant for $\chi\in \SA^*$.
 Moreover, this set (denoted by $\Layer^*_d$ henceforth) consists of complete items only.
 \end{lemma}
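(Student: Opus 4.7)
The plan is to prove the completeness assertion first (that each item sits entirely inside a single layer in any stable allocation), and then derive the invariance of the set $\Layer^*_d$ from it by a convex-combination argument.

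First, I would fix any $\chi \in \SA^*$ and suppose toward contradiction that some item $o$ is split across two distinct layers: agents $i \in \layer^*_d$ and $j \in \layer^*_{d'}$ both hold positive fractions $x_{oi}^\chi, x_{oj}^\chi > 0$ with $d > d'$. Cleanness of $\chi$ together with $x_{oj}^\chi > 0$ forces $v_j(\{o\})=1$, so the direct edge $i \to j$ already exists in the transfer graph of $\chi$. Choosing $\Delta = \min\{x_{oi}^\chi,\,(d-d')/3\}$ and reallocating that fraction of $o$ from $i$ to $j$ produces incomes $h_i - \Delta > h_j + \Delta$, which is a narrowing transfer in the sense of \Cref{def:stable-divisible}, contradicting the stability of $\chi$. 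Hence within any single stable allocation each item is wholly contained in one layer.

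For invariance, I would take two stable allocations $\chi,\chi' \in \SA^*$ and form the convex combination $\tilde\chi := \tfrac{1}{2}\chi + \tfrac{1}{2}\chi'$. The 0/1-add structure keeps $\tilde\chi$ clean (because $v_i(\{o\})=0$ forces $x_{oi}^\chi = x_{oi}^{\chi'} = 0$, hence $x_{oi}^{\tilde\chi}=0$), and profiles combine linearly, so $\p(\tilde\chi) = \p(\chi) = \p(\chi')$ by \Cref{thm:chev-divisible}. Therefore $\tilde\chi$ attains the optimum of every SPD criterion, which places it in $\SA^*$ by \Cref{thm:main-frac}. Now for each item $o$ and each layer index $d$ set $y_o^\xi(d) := \sum_{i \in \layer^*_d} x_{oi}^\xi$; applying the completeness step just proved to $\chi$, $\chi'$, and $\tilde\chi$ shows $y_o^\chi(d),\, y_o^{\chi'}(d),\, y_o^{\tilde\chi}(d) \in \{0,1\}$. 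Since $y_o^{\tilde\chi}(d) = \tfrac{1}{2}\bigl(y_o^\chi(d) + y_o^{\chi'}(d)\bigr)$ must itself lie in $\{0,1\}$, the two endpoint values must coincide, proving the invariance.

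The main obstacle is justifying that $\tilde\chi$ genuinely belongs to $\SA^*$: this hinges on \Cref{thm:chev-divisible} (so that averaging does not shift the profile) and on the linearity inherent to 0/1-add valuations (which preserves cleanness and max-USW). A minor technical care point lies in the choice of $\Delta$ in the first step, which must be small enough to keep the single-edge reallocation feasible yet positive enough to yield a strict narrowing inequality; the two constraints are simultaneously satisfiable precisely because both $x_{oi}^\chi$ and $d-d'$ are strictly positive.
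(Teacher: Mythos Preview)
Your completeness argument (an item split across two layers yields an immediate one-edge narrowing transfer) is essentially the paper's own argument for that half.

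For invariance, however, you take a genuinely different route. The paper fixes one stable $\chi$ and argues structurally: (i) no agent in a lower layer can value any item currently sitting in a higher layer, since that would already be a narrowing transfer in $\chi$; and (ii) items cannot migrate upward to higher layers in any other stable allocation, else that allocation would violate $\LexiMax$-optimality, shown by an induction from the top layer downward in the style of \Cref{lemma:layer-items}. Your convex-combination argument bypasses both steps: averaging $\chi$ and $\chi'$ preserves the profile by \Cref{thm:chev-divisible}, hence preserves optimality and membership in $\SA^*$ via \Cref{thm:main-frac}, and then the completeness step applied to $\tilde\chi$ forces $y_o^\chi(d)=y_o^{\chi'}(d)$. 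This is elegant and relies only on results already established independently in the paper (so there is no circularity). The trade-off is that the paper's direct argument yields, as a byproduct, the stronger structural fact that agents in $\layer^*_d$ simply do not value any item in $\Layer^*_{d'}$ for $d'>d$; this fact is invoked later in the proof of \Cref{lemma:reallocation}, and your approach does not produce it directly (though it can be recovered afterward by one more application of your completeness step).
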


 \begin{proof}
 Fix $\chi$. Let $L_d$ be the items allocated to $\layer_d$ under $\chi$.
 An item in $L_{d'}$ ($d'>d$) cannot be allocated to $\layer_d$ otherwise there is a narrowing transfer in $\chi$. 
 An item in $L_{d'}$ ($d'<d$) cannot be allocated to $\layer_d$ otherwise the allocation is not $\LexiMax$ optimal (can be proved by induction as in the proof of \Cref{prop:layer-items}).
 Therefore, those items in $L_d$ can only be allocated to $\layer_d$ (even for other stable allocations).
 We thus obtain the first part of this lemma.

 2. In any stable allocation, an item cannot be allocated to different layers.
 Otherwise there is clearly a simple narrowing transfer. 
 \end{proof}

 \Cref{lemma:layer_integer} implies a layer partition of items and agents, where items $\Layer^*_d$ and agents $\layer^*_d$ are in the same layer. 

 \begin{lemma}\label{lemma:reallocation}
 	A stable allocation $\gamma$ for the divisible case can be obtained by optimally reallocating items $\Layer^-_d$ (regarded as divisible items) to the agents $\layer^-_d$ and items $\Layer_d$ (regarded as indivisible items) to $\layer_d$ for each $d$.
 \end{lemma}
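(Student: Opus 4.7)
The plan is to explicitly construct $\gamma$ by concatenating stable sub-allocations within each layer and then verify it admits no narrowing transfer in the full DIV instance. First, I would fix any stable IND allocation $\chi$ (which exists by \Cref{thm:main}) and compute the layer partition $\layer_d = r_d$, $\layer^-_d = s_d$, $\Layer_d = R_d$, $\Layer^-_d = S_d$ exactly as in \Cref{subsect:layer partition}. For each $d$, I would then solve two sub-instances: reallocate the indivisible items of $\Layer_d$ to the agents of $\layer_d$ via any IND-stable sub-allocation (which, since $|\Layer_d| = d\,|\layer_d|$, gives each agent exactly $d$ items, because the uniform profile is the unique $\LexiMin$-optimum of the sub-instance); and reallocate the divisible items of $\Layer^-_d$ to the agents of $\layer^-_d$ via any DIV-stable sub-allocation (existence guaranteed by \Cref{thm:main-frac} applied to the sub-instance). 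Let $\gamma$ denote the union of these sub-allocations.

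To show $\gamma$ is stable in DIV, the key step is to prove that the edge graph of $\gamma$ respects the natural layer order $\layer_0 < \layer^-_1 < \layer_1 < \layer^-_2 < \cdots$, i.e.\ every edge $(i,i')$ of $\gamma$ satisfies $\mathrm{layer}(i') \geq \mathrm{layer}(i)$. This is inherited from Facts~1 and 2 in the proof of \Cref{lemma:layer-items}, which say that no item in $\Layer_d$ (resp.\ $\Layer^-_d$) is valued by any agent in a strictly lower layer. Since $\gamma$ keeps each $\Layer_d$ inside $\layer_d$ and each $\Layer^-_d$ inside $\layer^-_d$, those facts are properties of the valuations over the partition, not of $\chi$ itself, and so they transfer to $\gamma$ verbatim.

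Now suppose for contradiction that $\gamma$ admits a narrowing transfer along a path $u = i_1 \to \cdots \to i_k = v$ with $h_u - \Delta \geq h_v + \Delta$. By the layer-monotonicity just established, $\mathrm{layer}(u) \leq \mathrm{layer}(v)$. Income is weakly increasing in the layer order: agents in $\layer_d$ have income exactly $d$, and agents in $\layer^-_d$ have income in $[d-1, d]$ (the IND sub-allocation lies in $\{d-1,d\}$ and balancing it only pulls values toward the mean). Hence $\mathrm{layer}(u) < \mathrm{layer}(v)$ gives $h_u \leq h_v$, contradicting $h_u \geq h_v + 2\Delta$. If instead $\mathrm{layer}(u) = \mathrm{layer}(v)$, then layer-monotonicity forces every intermediate agent onto the same layer (we cannot descend), so the entire transfer uses only items and agents of that single sub-instance, contradicting the stability of the chosen sub-allocation.

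The main obstacle I expect is formalizing the layer-monotonicity claim: namely, checking that Facts 1--2 from \Cref{lemma:layer-items} are really statements about the valuations restricted to the partition $\{\Layer_d, \Layer^-_d\}$ rather than statements about $\chi$, and hence apply to every allocation that keeps items inside their original layers. Once this is pinned down, the case analysis above finishes the proof.
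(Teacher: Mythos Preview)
Your proposal is correct and follows essentially the same route as the paper's proof: build $\gamma$ layer by layer, then use Facts~1--2 from the proof of \Cref{lemma:layer-items} to rule out cross-layer narrowing transfers and invoke stability of each sub-allocation to rule out within-layer ones. You are actually more explicit than the paper on two points it leaves implicit---that income is weakly increasing along the layer order (so a path going to a strictly higher layer cannot be narrowing) and that a same-start/same-end path is trapped entirely in one layer---so the ``main obstacle'' you flag is exactly the right thing to nail down, and once you do the argument is complete.
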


 \begin{proof}
 For each $d$, let $\gamma^{-}_d$ be an optimal allocation of items $\Layer^-_d$ (regarded as divisible items) to the agents $\layer^-_d$. Moreover, let $\gamma_d$ be an optimal allocation of items $\Layer_d$ (regarded as indivisible items) to $\layer_d$ so that each agent in $\layer_d$ receives $d$ items.
 Combine $\gamma_d,\gamma^-_d(d\geq 0)$ in all layers to obtain an overall allocation $\gamma$.
 We claim that $\gamma\in \SA^*$. The proof is as follows.

 Those items in higher layer cannot be given to agents in lower layer -- there are no such edges as shown in the proof of \Cref{prop:layer-items} (this also holds for the divisible case).
 Hence there is no narrowing transfer between layers in $\gamma$. 
 Also, there is no narrowing transfer within any layer of $\gamma$ by the construction of $\gamma$. 
 Together, $\gamma$ admits no narrowing transfer and hence is stable. So, $\gamma\in\SA^*$. 
 \end{proof}



 \begin{lemma}\label{lemma:frac-bound} For any $\chi \in \SA^*$, it holds that

 1. $h_i=d$ for each $i\in \layer_d$. 
 
 2. $h_i\in [d-1,d]$ for $i\in\layer^-_d$.
 \end{lemma}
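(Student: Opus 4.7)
The plan is to analyze the explicit stable allocation $\gamma$ produced by Lemma~\ref{lemma:reallocation}, pin down $h_i(\gamma)$ on each IND layer, and then propagate these per-agent values to every $\chi \in \SA^*$ via Theorem~\ref{thm:chev-divisible}. Throughout I use Lemma~\ref{lemma:layer-agents} to identify $\layer_d = r_d$ and $\layer^-_d = s_d$, and I treat each layer as its own sub-problem since $\gamma$ allocates items layer by layer.

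For Claim~1, I would inspect $\gamma$ restricted to $\layer_d$ together with the indivisible items $\Layer_d$. Since $|\Layer_d| = d|\layer_d|$ and the sub-allocation is SPD-optimal inside this sub-problem by Lemma~\ref{lemma:reallocation}, Lemma~\ref{lemma:crucial-lexmin} applied to the sub-problem forces its profile to coincide (up to permutation) with the IND sub-profile $(d,\dots,d)$ inherited from $\chi$ on $\layer_d$. Thus every $i\in\layer_d$ receives exactly $d$ items in $\gamma$.

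For Claim~2, I would inspect $\gamma$ restricted to $\layer^-_d = p_d\cup q_{d-1}$ and the divisible items $\Layer^-_d$, whose total mass is $d|p_d|+(d-1)|q_{d-1}|$. This sub-allocation is DIV-stable for its sub-problem, and by Theorem~\ref{thm:main-frac} is therefore optimal under every SPD criterion, including $\LexiMax$ and $\LexiMin$. The inherited IND sub-profile on this layer, consisting of $|p_d|$ copies of $d$ and $|q_{d-1}|$ copies of $d-1$, is a feasible integer competitor for the same sub-problem, so LexiMax optimality forces the maximum of $\gamma$'s sub-profile to be at most $d$, while LexiMin optimality forces its minimum to be at least $d-1$. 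Hence $h_i(\gamma)\in[d-1,d]$ for every $i\in\layer^-_d$.

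Finally, Theorem~\ref{thm:chev-divisible} yields $\p(\chi)=\p(\gamma)$ for every $\chi\in\SA^*$, so the per-agent values computed for $\gamma$ transfer to every stable divisible allocation and both claims follow. The main obstacle is justifying Step~2: I need to verify that the restriction of $\gamma$ to $\layer^-_d$ genuinely behaves as a DIV-stable allocation of the isolated sub-problem on $(\Layer^-_d,\layer^-_d)$, and to argue the LexiMax/LexiMin dominance at the sub-profile level so that the bounds on the maximum and minimum inherited from the IND sub-allocation transfer cleanly to $\gamma$'s sub-allocation.
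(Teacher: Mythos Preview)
Your approach is essentially the same as the paper's: compute $\p(\gamma)$ layer by layer using the $\LexiMin$/$\LexiMax$ optimality of the sub-allocations, then transfer the per-agent values to every $\chi\in\SA^*$ via Theorem~\ref{thm:chev-divisible}. The obstacle you flag for Step~2 is a non-issue, since in the construction of Lemma~\ref{lemma:reallocation} each $\gamma^-_d$ is \emph{by definition} an optimal (hence stable, hence $\LexiMin$- and $\LexiMax$-optimal) allocation of the isolated divisible sub-problem on $(\Layer^-_d,\layer^-_d)$.
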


 \begin{proof}
 Denote $(g_1,\ldots,g_n)=\p(\gamma)$ and $(h_1,\ldots,h_n)=\p(\chi)$.

 Combining (1)-(3) below, we immediately obtain the lemma.
 	\begin{itemize}
         \item[(1)] For each $i\in [n]$, it holds that $h_i=g_i$ (apply \Cref{thm:chev-divisible} with $\gamma,\chi\in \SA^*$).
         \item[(2)] For each $i\in \layer_d$, it holds that $g_i=d$ (trivial).
         \item[(3)] For each $i\in \layer^-_d$, it holds that $g_i \in \left[d-1, d \right]$.
             (Proof: Since $\gamma^-_d$ is ($\LexiMin$) optimal,
       $g_i\geq d-1$. Since $\gamma^-_d$ is also ($\LexiMax$) optimal, $g_i\leq d$.)
 	\end{itemize}  
 \end{proof}

 \begin{lemma}\label{lemma:frac-distance}
 	Given $\chi\in \SA^*$, for two different layers $\layer^*_d$ and $\layer^*_{d'}$ of $\chi$, we have $$\left\vert d-d'\right\vert \geq \frac{1}{n^2}.$$
 \end{lemma}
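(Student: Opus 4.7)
The plan is to show that each layer value $d$ is forced to be a rational number with a small denominator, and then apply the standard lower bound on the distance between two distinct fractions whose denominators are bounded.

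First I would observe that for a fixed stable allocation $\chi\in\SA^*$, every agent $i\in\layer^*_d$ receives exactly $h_i=d$ by the very definition of $\layer^*_d$. Summing over the layer, the total mass of items allocated to $\layer^*_d$ in $\chi$ equals $d\cdot|\layer^*_d|$. On the other hand, by \Cref{lemma:layer_integer}, the set $\Layer^*_d$ of items allocated to $\layer^*_d$ is invariant across stable allocations and consists of \emph{complete} items (no item in $\Layer^*_d$ is partially handed out to another layer), so its cardinality $|\Layer^*_d|$ is a positive integer and all of it is fully consumed inside the layer. Therefore
\[
 d \;=\; \frac{|\Layer^*_d|}{|\layer^*_d|},
\]
a positive rational whose denominator $|\layer^*_d|$ is a positive integer with $|\layer^*_d|\le n$.

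Next I would do exactly the same for the other layer, obtaining $d' = |\Layer^*_{d'}|/|\layer^*_{d'}|$ with denominator at most $n$. Writing $d=p/q$ and $d'=p'/q'$ in any form with positive integers $q,q'\le n$, we get
\[
 |d-d'| \;=\; \frac{|pq'-p'q|}{qq'}.
\]
Because $d\ne d'$, the integer $|pq'-p'q|$ is nonzero, hence at least $1$; and $qq'\le n^2$. Combining gives $|d-d'|\ge 1/n^2$, as claimed.

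The only place where anything subtle is used is the integrality of $|\Layer^*_d|$ together with the fact that $\layer^*_d$ is an entire layer (so all its members share the same $h_i=d$); both are already delivered by \Cref{thm:chev-divisible} and \Cref{lemma:layer_integer}. Thus there is no real obstacle — the step that requires care is just making sure we are entitled to divide by the integer $|\layer^*_d|$, i.e., that no item in $\Layer^*_d$ leaks out of the layer and no agent in $\layer^*_d$ receives anything outside $\Layer^*_d$; both follow from the invariance part of \Cref{lemma:layer_integer}. Once the rational representation of $d$ with denominator at most $n$ is established, the bound $1/n^2$ is the standard Farey-type estimate and is immediate.
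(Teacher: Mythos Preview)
Your argument is correct and follows the same route as the paper's proof: both show that $d=|\Layer^*_d|/|\layer^*_d|$ (you derive this directly from the definition of $\layer^*_d$ together with \Cref{lemma:layer_integer}, while the paper phrases it via \Cref{lemma:reallocation}), observe that the numerator is an integer and the denominator is at most $n$, and then apply the standard lower bound for the difference of two distinct fractions with bounded denominators. The only cosmetic slip is calling $|\Layer^*_d|$ a \emph{positive} integer---it could be $0$ when $d=0$---but this does not affect the argument.
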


 \begin{proof}
 	By Lemma~\ref{lemma:reallocation}, we obtain that
 	\begin{equation*}
 		d =  \frac{|\Layer^*_d|}{|\layer^*_d|}, \quad d' = \frac{|\Layer^*_{d'}|}{|\layer^*_{d'}|}.
 	\end{equation*}
 	Recall Lemma~\ref{lemma:layer_integer}. $|\Layer^*_d|$ and $|\Layer^*_{d'}|$ are integers. Thus,
 	\begin{equation*}
 		\left\vert d - d'\right\vert 
 		= \left\vert\frac{|\Layer^*_{d}||\layer^*_{d'}|-|\Layer^*_{d'}||\layer^*_{d}|}{|\layer^*_{d}||\layer^*_{d'}|}\right\vert
 		\geq\frac{1}{n^2}.
 	\end{equation*}
 \end{proof}

 The last three lemmas are crucial to the combinatorial algorithm.

 \medskip For divisible case, we divide each item into $2n^2$ identical (unbreakable) pieces with size $\frac{1}{2n^2}$, and solve this indivisible case (with $2n^2m$ indivisible pieces) using the algorithm in \Cref{thm:main} Claim~2. 

 We obtain an allocation with several layers, each layer's index $d$ is a real number -- a multiple of $\frac{1}{2n^2}$. 
 For those agent $j \in \layer_d$, we have $h_j = d$, and 
 for those $i \in \layer^-_d$, we have $h_i \in [d-\frac{1}{2n^2},d]$. 

 According to Lemma~\ref{lemma:frac-bound}, for the divisible case,
   we have $$h_i\in [d-\frac{1}{2n^2},d] \text{ for }i \in \layer^-_d.$$
  
 As a result, $i \in \layer^*_{d'}$, for some $d'\in [d-\frac{1}{2n^2},d]$. 

 By Lemma~\ref{lemma:frac-distance}, there can only be one such $d'$ in the range of $[d-\frac{1}{2n^2},d]$ for $i \in \layer^-_d$. By Lemma~\ref{lemma:reallocation}, we can obtain an optimal allocation for divisible case just by reallocating items within each layer, and we can calculate that $d'=\frac{|\Layer^-_d|}{|\layer^-_d|}.$

 The obtained algorithm deals with $2mn^2$ items and $n$ agents. Therefore, it has a time complexity of $O(m^2n^5)$, which is still better than the linear programming procedure (explained below) with time complexity of $O(m^{3.5}n^{5.5})$ with $O(mn)$ variables and $O(n^2)$ linear programming problems in the worst case.


 \subsection{Find a stable allocation for divisible items using LP and a proof of Theorem~\ref{thm:chev-divisible}}\label{subsect:alg-stable-div}

 This subsection provides an alternative approach based on linear programming for finding a stable allocation for divisible items. This approach is more straightforward but it is not combinatorial.

 \smallskip Applying \Cref{thm:main-frac}~Claim~1, finding a stable allocation
 	reduces to finding an allocation with the minimum $\LexiMin$.
     We claim that 
     \begin{itemize}
     \item[(1)] the latter further reduces to computing the ``fair multi-flow'' \cite{nace2002polynomial} in the network below:
 	\end{itemize}

 \begin{figure}[h]
 \centering \includegraphics[width=5.8cm]{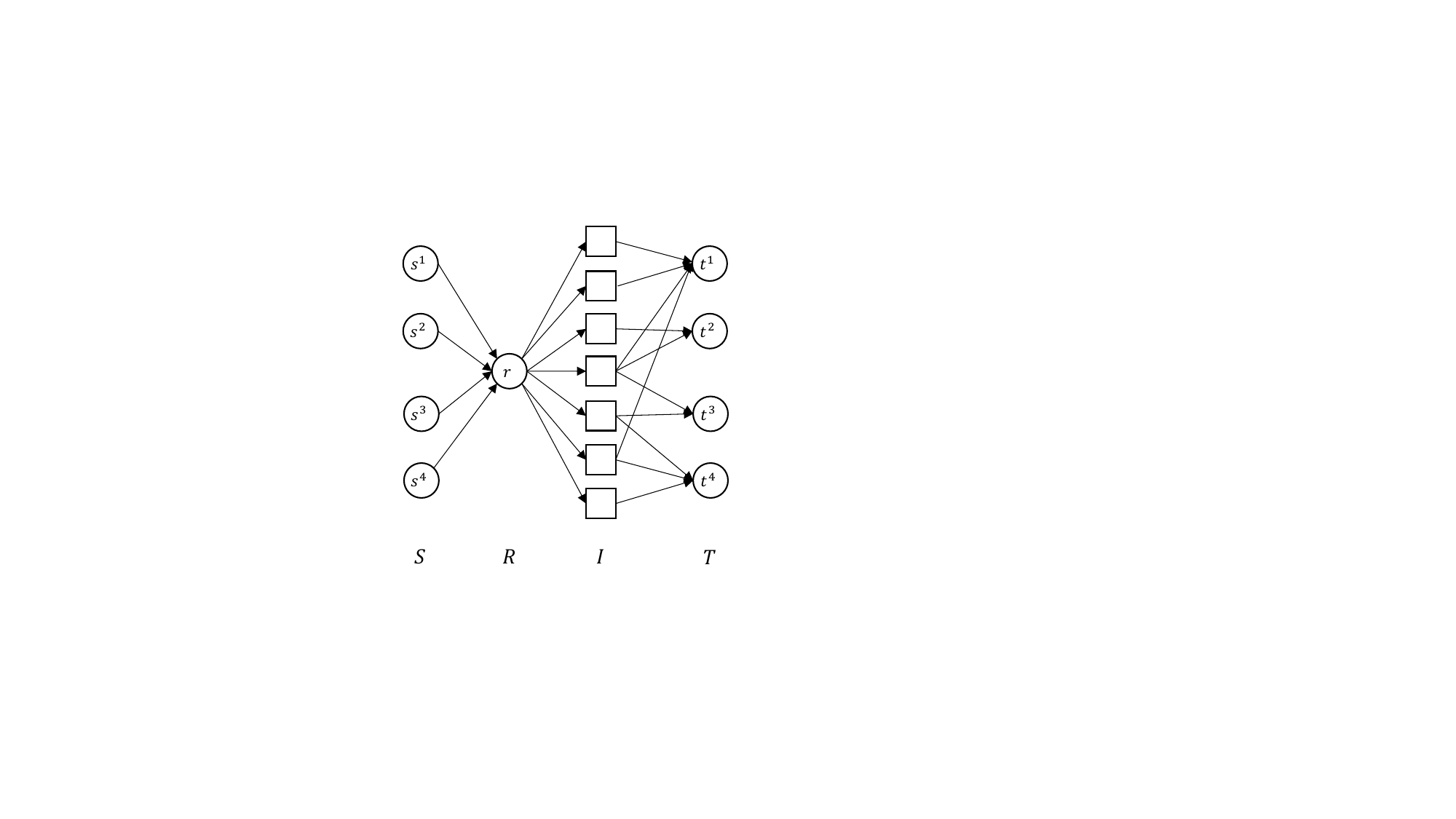}
 \caption{Reduction to fair multi-flow problem.} \label{pic:multi-flow}
 \end{figure}

     \smallskip See \Cref{pic:multi-flow}. There are $n$ source nodes $s^1,\ldots,s^n$ in the first layer $S$. 
 	The second layer $R$ consists of only one node $r$ as a relay. 
 	The third layer $I$ has $m$ nodes, standing for items. 
     The last layer $T$ has $n$ sink nodes $t^1,\ldots,t^n$.
     For each agent $d$, there is an arc $(s^d,r)$ of unlimited capacity. 
     For every item $i$ in $I$, there is an arc $(r,i)$ of capacity 1.  
 	If item $i$ can be allocated to agent $d$, an arc $(i,t^d)$ is added with capacity 1. 

 	In the aforementioned fair multi-flow problem, 
        we shall send a flow $f_d$ from $s^d$ to $t^d$ for each $d$. The sum of the flows on each edge cannot exceed the capacity, and the $n$ amount of flows $|f_1|,\ldots,|f_n|$ as a vector should have the lowest $\LexiMin$.
     Clearly, the solution to this problem
        corresponds to the optimal allocation under $\LexiMin$
          (items are fully allocated as the multi-flow is $\LexiMin$), therefore claim~(1) holds.
	
     \smallskip Finally, recall the result of \cite{nace2002polynomial}
       which states that the fair multi-flow problem can be reduced to a polynomial number of linear programming problems
         and thus be solved in polynomial time. 

\medskip
 \begin{proof}[Proof of \Cref{thm:chev-divisible}]
     Recall finding a stable allocation by computing a fair multi-flow, the flow vector $(|f_1|,\ldots,|f_n|$) is unique for multi flow problem, according to \cite{nace2002polynomial}.
  Therefore the profile of optimal allocation is also unique.
 \end{proof}

\section{Indivisible items and agents with 0/1-sub valuations}\label{sec:IND-SUB}
\begin{sloppypar}

We now move on to the scenario of IND-SUB.

The 0/1-sub function is closely related to the matroid theory \cite{Oxley-10.1093/acprof:oso/9780198566946.001.0001}.
A matroid is a pair $(E,\mathcal{I})$, where $E$ is a finite set (called the \emph{ground} set) and $\mathcal{I}$ is a family of subsets of $E$ (called the \emph{independent} sets). 

The independent sets satisfy the following three axioms:
\begin{enumerate}[label=(I\arabic*)]
	\item $\emptyset \in \mathcal{I}$,
	\item if $Y \in \mathcal{I}$ and $X \subseteq Y$, then $X \in \mathcal{I}$, and
	\item if $X,Y \in \mathcal{I}$ and $|X| < |Y|$, then there exists $y \in Y \setminus X$ such that $X\cup \{y\} \in \mathcal{I}$.
\end{enumerate}

If an agent has a 0/1-sub valuation on items, then the set of \emph{clean} bundles forms the set of independent sets of a matroid (proved in Benabbou et al. \cite{10.1145/3485006}).
Benabbou et al. \cite{10.1145/3485006} further showed that under 0/1-sub valuations, the set of max-USW allocations (under any SPD criterion) are consistent with $\LexiMin$ allocations.
This result generalizes \Cref{thm:main} Claim~1 (for the 0,1-add valuations). 
Besides, Babaioff et al.\cite{babaioff2021fair} showed that under 0/1-sub valuation, the allocation that optimizes $\NSW$ can be found in polynomial time.

\smallskip To the best of our knowledge, however, it is not known 
  to what extent can $\p(\chi),\p(\chi')$ differ for different optimal allocations $\chi,\chi'$.
We answer this question by the following theorem.

\newcommand{\SSA}{\mathcal{SS}}
Let $\SSA$ denote the set of (max-USW and clean) allocations which are optimal under any SPD criterion.

\begin{theorem}\label{thm:01SUB-Chev}
For $\chi,\chi'\in \SSA$, it holds that $|h_i-h'_i|\leq 1$ for each agent $i\in [n]$, where $(h_1,\ldots,h_n)=\p(\chi)$ and $(h'_1,\ldots,h'_n)=\p(\chi')$.
In other words, the Chebyshev distance $D(\p(\chi),\p(\chi'))\leq 1$.
\end{theorem}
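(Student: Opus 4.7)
The plan is to extend the argument for \Cref{thm:cheb} from binary additive valuations to the matroid setting. Recall that under 0/1-sub valuations the clean bundles of each agent $i$ form the independent sets of a matroid $M_i$ (\cite{10.1145/3485006}), and by the consistency result cited right above the theorem, $\SSA$ coincides with the set of $\LexiMin$-optimal max-USW clean allocations. In particular, $\chi, \chi' \in \SSA$ both achieve the same max-USW value, so $\sum_i h_i = \sum_i h'_i$.

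I would argue by contradiction. Suppose some agent $a$ has $|h_a - h'_a| \ge 2$; after possibly swapping the roles of $\chi$ and $\chi'$ we may assume $h_a \ge h'_a + 2$. Conservation of the total then gives an agent $b$ with $h'_b \ge h_b + 1$. The goal is to exhibit, in one of the two allocations, a narrowing transfer (in the sense appropriate for 0/1-sub, i.e., a swap sequence that decreases one bundle by one and increases another by one while keeping every bundle clean), thereby contradicting $\LexiMin$-optimality via the SPD property. A short case split selects the right allocation: if $h_a \ge h_b + 2$ in $\chi$, set $(u,v) = (a,b)$ and look for a narrowing transfer in $\chi$; otherwise $h_a \le h_b + 1$ combined with the hypotheses forces $h'_b - h'_a \ge 2$ in $\chi'$, and we instead set $(u,v) = (b,a)$ and work inside $\chi'$.

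The construction of the transfer uses a standard matroid exchange graph $G(\chi)$ on the items: put a directed edge $o \to o'$ labelled by agent $i$ whenever $o \in \chi_i$, $o' \notin \chi_i$, and $\chi_i - o + o'$ is independent in $M_i$; mark an item $o'$ as a ``sink for $v$'' whenever $\chi_v + o'$ is itself independent in $M_v$. A directed path in $G(\chi)$ from some $o \in \chi_u$ to a sink for $v$, executed backwards, realises a narrowing transfer $u \to v$ of exactly one unit, leaving all other $h_j$ unchanged and preserving cleanness. The resulting allocation $\chi''$ satisfies $h''_u = h_u - 1$, $h''_v = h_v + 1$, and since $h_u \ge h_v + 2$, $\p(\chi'')$ is strictly more balanced than $\p(\chi)$; by the SPD property this gives $\LexiMin(\chi'') < \LexiMin(\chi)$, the desired contradiction.

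The main obstacle is certifying the existence of such an exchange path whenever $h_u > h'_u$ and $h_v < h'_v$. For 0/1-add valuations every distinguishing item can simply be moved, but for 0/1-sub valuations intermediate swaps through matroid circuits are generally unavoidable. I expect to handle this via the matroid union $M = M_1 \vee \cdots \vee M_n$: both $\bigsqcup_i \chi_i$ and $\bigsqcup_i \chi'_i$ are bases of $M$, and moving from $\chi$ to $\chi'$ shifts one unit of rank from $u$'s copy to $v$'s copy, so a symmetric-exchange / alternating-path argument on the symmetric differences $\chi_i \triangle \chi'_i$ should produce a $u$-to-$v$ augmenting walk in $G(\chi)$. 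Verifying that this walk really does yield a single-path transfer preserving cleanness at every intermediate bundle (and hence leaving $h_j$ unchanged for $j \notin \{u,v\}$) is the technical heart of the proof.
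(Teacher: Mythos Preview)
Your proposal has two genuine gaps, and the paper's proof takes a rather different route to avoid them.

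\medskip
\textbf{First gap: no guarantee of a $u\to v$ path.}
Your case split produces a pair $(u,v)$ with $h_u\ge h_v+2$, $h_u>h'_u$, $h_v<h'_v$, and you then want an exchange path in $G(\chi)$ from $\chi_u$ to a sink for $v$. But the two inequalities $h_u>h'_u$ and $h_v<h'_v$ do \emph{not} imply that the augmenting structure leaving $u$ reaches $v$. Already in the 0/1-add case (see the proof of \Cref{thm:cheb}) the difference $\chi\to\chi'$ decomposes into many paths, and the paths starting at $u$ may all end at third agents $w$ with $h_w=h_u-1$ --- swapping, not narrowing, transfers. Since $\chi$ is $\LexiMin$-optimal it admits no narrowing transfer at all, so there is nothing contradictory in that situation by itself; you need an extra argument. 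The paper supplies it by choosing, among all counterexample pairs $(\chi,\chi')$, one of \emph{minimum} symmetric difference $\sum_i|\chi_i\triangle\chi'_i|$. It then builds the exchange sequence starting at the smallest index $i$ with $h_i<h'_i$; if the sequence terminates at an agent $i_t$ with $h_{i_t}=h_i+1$, carrying out the sequence yields a new allocation still in $\SSA$, still with $D\ge 2$ (because the witness $q$ with $h_q\ge h'_q+2$ satisfies $h_q\ge h_i+2$ and hence cannot lie on the sequence), but with strictly smaller symmetric difference --- contradicting minimality. This minimality device is the key idea your plan is missing; the matroid-union sketch does not substitute for it.

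\medskip
\textbf{Second gap: cleanness along the path.}
Even if your path exists, executing it need not keep every bundle clean. Your exchange graph permits an agent to appear at several positions along the path, and then the composed swap can push that agent's bundle outside its matroid. The paper's Remark following the proof gives an explicit instance (agent~$2$ with independent sets $\{a,c\},\{a,d\},\{b,c\},\{b,d\}$ but not $\{a,b\}$) where exactly this fails; indeed this is the flaw they identify in the original proof of Lemma~3.12 in \cite{10.1145/3485006}. The paper's fix is to construct the sequence so that each $o_k$ is chosen from $\chi'_{i_{k-1}}\setminus\chi^{(k-1)}_{i_{k-1}}$ and to argue that no \emph{item} repeats, which is enough to guarantee cleanness after the transfer. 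Your sketch acknowledges the issue but does not resolve it.
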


\begin{proof} 
Assume (i) $\chi,\chi'\in \SSA$ and (ii) $D(\p(\chi),\p(\chi'))\geq 2$.
If there are multiple such pairs of allocations, 
  take the pair $(\chi,\chi')$ with minimum symmetric difference $\sum_{i \in [n]} | \chi_i \triangle \chi'_i|$.

Without loss of generality, assume that 
    $$h_1\leq h_2 \leq \dots \leq h_n \text{ and }h_q \geq h'_q+2.$$ 

Assume that $h'_{j_1} \leq h'_{j_2} \leq \dots \leq h'_{j_n}$.

Notice that $\p(\chi) \neq \p(\chi')$. Take the minimum index $i$ satisfying $h_i \neq h'_i$. Clearly, for all $k\in[i-1]$ we have
\begin{equation}\label{eq:k}
h_k = h'_k
\end{equation} 
In fact, it must hold that 
\begin{equation}\label{ineq:i}
h_i < h'_i
\end{equation}  
Recall that the smaller $\LexiMin(\p)$, the better the vector $\p$ under $\LexiMin$.
Indeed, if $h_i > h'_i$, then together with \eqref{eq:k}, 
    $$\LexiMin((h_1,\dots,h_i)) > \LexiMin((h'_1,\dots,h'_i)).$$
Moreover, 
    $$\LexiMin((h'_1,\dots,h'_i))\geq \LexiMin((h'_{j_1},\dots,h'_{j_i}))$$ 
     since $h'_{j_1},\dots,h'_{j_i}$ are the smallest $i$ elements in $\p(\chi')$. 
Together, we have $\LexiMin((h_1,\dots,h_i)) > \LexiMin((h'_{j_1},\dots,h'_{j_i}))$.
By comparing the smallest $i$ elements of  $\p(\chi)$ and  $\p(\chi')$, we see $\LexiMin(\p(\chi))>\LexiMin(\p(\chi'))$, contradicting the assumption that $\chi$ and $\chi'$ are both $\LexiMin$ optimal.

By the definition of $i$ and the assumption $h_q \geq h'_q+2$, it is easy to see $q>i$ and hence $h_q\geq h_i$. Further, we claim 
\begin{equation}\label{ineq:q>=i+2}
h_q \geq h_i+2
\end{equation} 
It reduces to prove $h'_q \geq h_i$. 
    Since $\chi$ and $\chi'$ are both $\LexiMin$ optimal, the two multisets $\{h_1,\dots,h_n\}$ and $\{h'_1,\dots,h'_n\}$ are equivalent.
    By \eqref{eq:k}, for all $k\in[i-1], h_k = h'_k$.
Together, the multiset $\{h_i,\dots,h_n\}$ equals $\{h'_i,\dots,h'_n\}$.
Further by the assumption $h_1\leq\dots\leq h_n$, the elements in $\{h'_i,\dots,h'_n\}$ are not smaller than $h_i$.
Recall $q>i$, we have $h'_q \geq h_i$. 

\smallskip
Recall that the family of clean bundles $\mathcal{I}_j=\{S \subseteq [m] \mid v_j(S)=|S|\}$ for $j \in [n]$ forms a family of independent sets of a matroid \cite{10.1145/3485006}. 
By $($I3$)$ of the independent-set matroid axioms and inequality \eqref{ineq:i}, there exists an item $o_1 \in \chi'_i \setminus \chi_i$ making $v_i(\chi_i \cup \{o_1\})=v_i(\chi_i)+1 = h_i+1$. 
And $o_1$ is allocated to some agent $i_1 \neq i$ under allocation $\chi$.
Otherwise, $o_1$ is not allocated to anyone, and can be allocated to agent $i$ violating that $\chi$ is max-USW. Consider the following three cases:
\begin{enumerate}
    \item Suppose $h_{i_1} \geq h_i +2$. Then transferring $o_1$ from $i_1$ to $i$ in $\chi$ decreases $\LexiMin(\p(\chi))$, contradicting that $\chi$ is $\LexiMin$ optimal. 
    
    \item Suppose $h_{i_1} = h_i +1$. We note that $i_1 \neq q$ since $h_q\geq h_i+2$ (inequality \eqref{ineq:q>=i+2}). If we transfer $o_1$ from $i_1$ to $i$ in $\chi$, $\LexiMin(\p(\chi))$ and $h_q$ are unchanged, which means $(\chi,\chi')$ still satisfies the two conditions (i) $\chi,\chi'\in \SSA$ and (ii) $D(\p(\chi),\p(\chi'))\geq 2$,
    but the $\sum_{i \in [n]}|\chi_i \triangle \chi'_i|$ decreases, a contradiction.
    
    \item Suppose $h_{i_1} \leq h_i$. 
    
    We first show $h_{i_1} \leq h'_{i_1}$. This clearly holds by \eqref{eq:k} if $i_1 < i$. 
    
    When $i_1 > i$, since $h_1 \leq \dots \leq h_n$, we have $h_{i_1} \geq h_i$. Together with the assumption $h_{i_1} \leq h_i$ in this case, we have $h_{i_1} = h_i$. 

    Suppose to the oppose that $h_{i_1} > h'_{i_1}$, which means $h_i > h'_{i_1}$. 
    Applying \eqref{eq:k}, 
        $$\qquad \LexiMin((h_1,\dots,h_{i-1},h_i)) > \LexiMin((h'_1,\dots,h'_{i-1},h'_{i_1})).$$ 
    Moreover, 
        $$\LexiMin((h'_1,\dots,h'_{i-1},h'_{i_1})) \geq \LexiMin((h'_{j_1},\dots,h'_{j_i})),$$ 
    since $h'_{j_1},\dots,h'_{j_i}$ are the smallest $i$ elements in $\p(\chi')$. 
    Together, $\LexiMin((h_1,\dots,h_i)) > \LexiMin((h'_{j_1},\dots,h'_{j_i}))$, by comparing the smallest $i$ elements of  $\p(\chi)$ and  $\p(\chi)$, yielding $\LexiMin(\p(\chi))>\LexiMin(\p(\chi'))$, a contradiction.

    With $h_{i_1} \leq h'_{i_1}$ (i.e. $v_{i_1}(\chi_{i_1}) \leq v_{i_1}(\chi'_{i_1})$) in hand, since $\chi_{i_1}$ is clean, we have $v_{i_1}(\chi_{i_1}\setminus\{o_1\})  < v_{i_1}(\chi'_{i_1})$. 
    Further since $\chi_{i_1}\setminus\{o_1\}$ and $\chi'_{i_1}$ are clean (i.e. independent sets of a matroid), there exists an item $o_2 \in \chi'_{i_1} \setminus (\chi_{i_1}\setminus\{o_1\})$ such that $v_{i_1}(\chi_{i_1}\setminus\{o_1\} \cup \{o_2\}) = v_{i_1}(\chi_{i_1}) = h_{i_1}$.
    And $o_2$ is allocated to some agent $i_2 \neq i_1$ under $\chi$, as otherwise $o_2$ is not allocated to anyone and we can transfer $o_1$ from $i_1$ to $i$ and allocate $o_2$ to $i_1$ in $\chi$ violating that $\chi$ is max-USW.
    We note that $\chi'_{i_1} \setminus (\chi_{i_1}\setminus\{o_1\}) = \chi'_{i_1} \setminus \chi_{i_1}$ and $o_2 \neq o_1$ because $o_1 \notin \chi'_{i_1}$ (recall that $o_1\in \chi'_i \setminus \chi_i$) and $o_2 \in \chi'_{i_1}$.
\end{enumerate}

Repeating the same argument and letting $i_0=i$, we obtain a sequence of items and agents $(i_0,o_1,i_1,\dots,o_t,i_t)$. Let $\chi^{(k)}$ denote the allocation that transferring $o_l$ from $i_l$ to $i_{l-1}$ under $\chi$ for all $l\in[k]$. The sequence of items and agents satisfying $o_k \in 
\chi'_{i_{k-1}} \setminus \chi^{(k-1)}_{i_{k-1}}$.

See \Cref{pic:01SUB} for an illustration of the sequence. We note that the same item $o_l$ does not appear again, since for $k\geq l$, we have $o_l \in\chi^{(k)}_{i_{l-1}}$ (as a result, $o_l \notin \chi'_{i_{k-1}} \setminus \chi^{(k)}_{i_{l-1}}$) . 
Thus, the sequence must terminate when we reach the agent $i_t$ with $h_{i_t} \geq h_i+2$ or $h_{i_t} = h_i+1$ for the first time. If $h_{i_t} \geq h_i+2$, after transferring items along the sequence, we have that $\chi$ is not $\LexiMin$ optimal, a contradiction. 
If $h_{i_t} = h_i+1$, we note that agent $q$ (recall that $h_q \geq h'_q+2$) is not in the sequence since $h_q\geq h_i+2$ (inequality \eqref{ineq:q>=i+2}). 
After transferring items along the sequence, $\LexiMin(\p(\chi))$ and $h_q$ are unchanged, which means $(\chi,\chi')$ still satisfies the two conditions (i) $\chi,\chi'\in \SSA$ and (ii) $D(\p(\chi),\p(\chi'))\geq 2$,
    but the $\sum_{i \in [n]}|\chi_i \triangle \chi'_i|$ decreases, a contradiction.
    
\begin{figure}[t]
	\centering
	\includegraphics[width=.6\columnwidth]{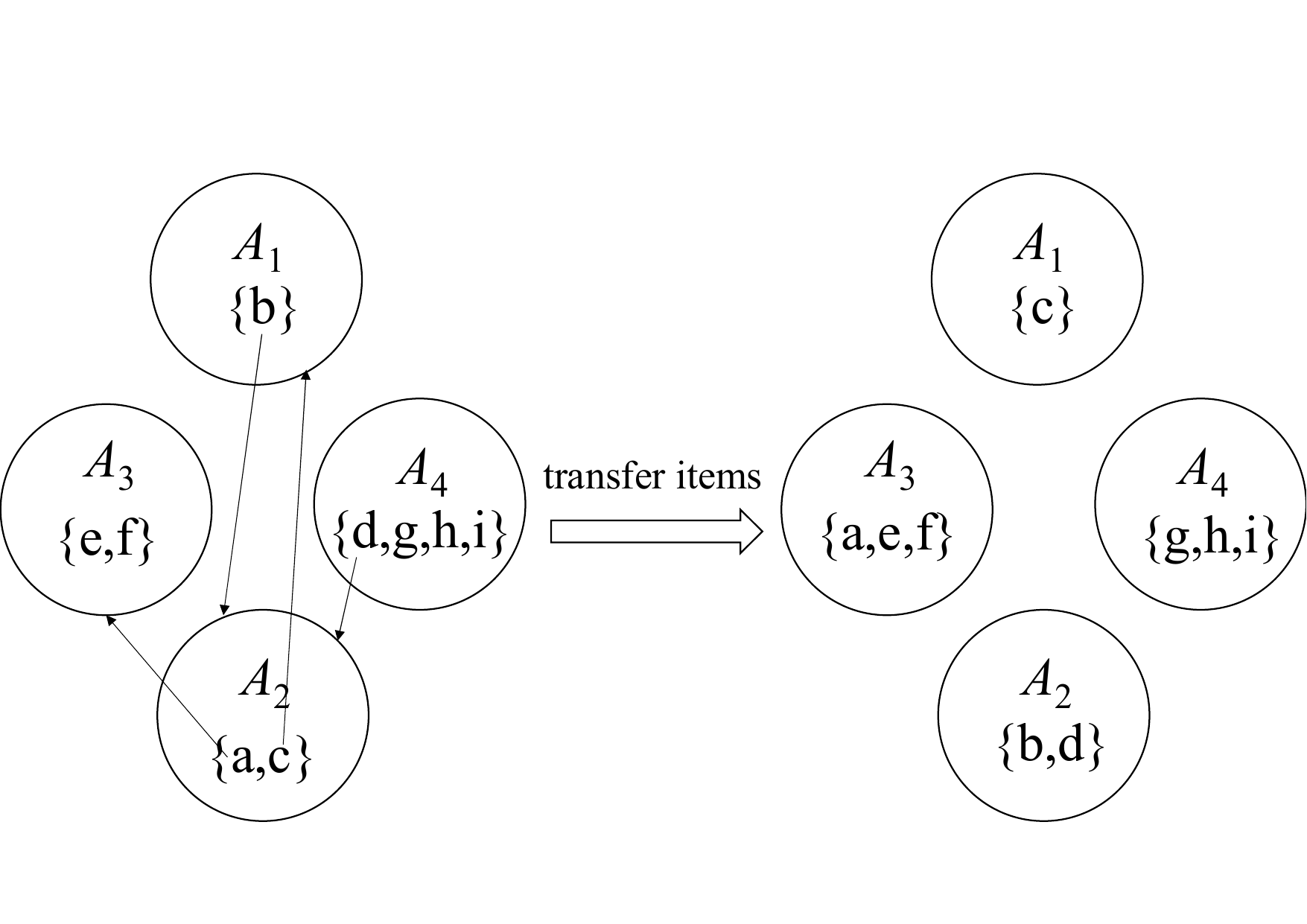}\quad
	\caption{Transferring items along the sequence $(i_0,o_1,i_1,\dots,o_t,i_t)$. Some items are denoted by $a,\dots,g$.}\label{pic:01SUB}
\end{figure}

To sum up, suppose to the opposite of \Cref{thm:01SUB-Chev} that there are some pairs of allocations $(\chi,\chi')$ that meets (i) $\chi,\chi'\in \SSA$ and (ii) $D(\p(\chi),\p(\chi'))\geq 2$, then it leads to some contradiction. 
Thus, for $\chi,\chi'\in \SSA$, it holds that $D(\p(\chi),\p(\chi'))\leq 1$.
\end{proof}

\begin{remark}
    Our proof of \Cref{thm:01SUB-Chev} is similar to that of Lemma 3.12 in \cite{10.1145/3485006}, which is the 0/1-sub valuation version of \Cref{lemma:crucial-lexmin} (implying an allocation non-optimal under $\LexiMin$ is non-optimal under any SPD criterion).
    There is a minor flaw in their proof of Lemma 3.12 in \cite{10.1145/3485006} which we have modified in our proof of \Cref{thm:01SUB-Chev}.

    In the proof of Lemma 3.12 in \cite{10.1145/3485006}, for the sequence of items and agents $(i_0,o_1,i_1,\dots,o_t,i_t)$, they argue that no same \textbf{agent} appears twice,
      which implies that if the same \textbf{agent} appears again, the allocation $\chi$ is still clean after transferring items along the cycle (which is a subsequence of the sequence). 
    However, this is not true. See the example in \Cref{pic:01SUB}, where the sequence is $(3,a,2,b,1,c,2,d,4)$.
    After transferring items along the cycle $(2,b,1,c,2)$ (which is equivalent to swapping items $b$ and $c$ between agents $2$ and $1$), the allocation may not be clean if $\{a,b\}$ is not a clean bundle of agent $2$.
    Indeed, the family of clean bundles of agent $2$ may be $\mathcal{I}_2=\{\{a,c\},\{a,d\},\{b,c\},\{b,d\},\{a\},\{b\},\{c\},\{d\}\}$ such that $\{a,b\} \notin \mathcal{I}_2$.

    We note that Lemma 3.12 in \cite{10.1145/3485006} still holds and the minor flaw in its proof in \cite{10.1145/3485006} can be corrected by arguing that no same \textbf{item} appears twice like ours proof of \Cref{thm:01SUB-Chev}.
\end{remark}
\end{sloppypar}

\section*{Summary}

For the binary valuations, it is known the set of optimums has nothing to do with the exact criterion, as long as the criterion is SPD and symmetric. A complete proof to this result was scattered in literature and is provided in this paper.
Furthermore, this paper discusses the consistency among the set of SPD optimal allocations, that is,
   their profiles are close to each other under Chebyshev distance, which holds for 0/1-sub valuations.
     The proof is nontrivial.
  
For the 0/1-add valuations, we introduce a layer partition among items and agents.
Together with the idea of dividing items into $2n^2$ identical pieces,
   we obtain the first combinatorial algorithm for computing an optimal allocation (i.e. a stable allocation) for the divisible items, which has a lower complexity than the LP approach.
  



%
%
%
%
\bibliographystyle{splncs04}
\bibliography{reference}

@book{CPB1989,
  author = {Coulter, Philip B.},
  year = {1989},
  title = {Measuring Inequality: A Methodological Handbook (1st ed.)},
  publisher = {Routledge},
  doi = {10.4324/9780429042874}
}

@incollection{ATKINSON2015xvii,
title = {Introduction: Income Distribution Today},
editor = {Anthony B. Atkinson and François Bourguignon},
series = {Handbook of Income Distribution},
publisher = {Elsevier},
volume = {2},
pages = {xvii-lxiv},
year = {2015},
booktitle = {Handbook of Income Distribution},
issn = {1574-0056},
doi = {10.1016/B978-0-444-59428-0.09989-6},
author = {Anthony B. Atkinson and François Bourguignon}
}

@inproceedings{nace2002polynomial,
  title={A polynomial approach to the fair multi-flow problem},
  author={Dritan Nace and Linh Nhat Doan},
  year={2002},
  url={https://api.semanticscholar.org/CorpusID:53698595}
}

@book{moulin1991axioms,
  title={Axioms of cooperative decision making},
  author={Moulin, Herv{\'e}},
  number={15},
  year={1991},
  publisher={Cambridge university press}
}

@article{darmann2015maximizing,
  title={Maximizing Nash product social welfare in allocating indivisible goods},
  author={Darmann, Andreas and Schauer, Joachim},
  journal={European Journal of Operational Research},
  volume={247},
  number={2},
  pages={548--559},
  year={2015},
  publisher={Elsevier}
}

@inproceedings{barman2018greedy,
  title={Greedy Algorithms for Maximizing Nash Social Welfare},
  author={Barman, Siddharth and Krishnamurthy, Sanath Kumar and Vaish, Rohit},
  booktitle={Proceedings of the 17th International Conference on Autonomous Agents and MultiAgent Systems},
  pages={7--13},
  year={2018}
}

@inproceedings{halpern2020fair,
  title={Fair division with binary valuations: One rule to rule them all},
  author={Halpern, Daniel and Procaccia, Ariel D and Psomas, Alexandros and Shah, Nisarg},
  booktitle={Web and Internet Economics: 16th International Conference, WINE 2020, Beijing, China, December 7--11, 2020, Proceedings 16},
  pages={370--383},
  year={2020},
  organization={Springer}
}

@article{lin2004parallel,
  title={Parallel machine scheduling of machine-dependent jobs with unit-length},
  author={Lin, Yixun and Li, Wenhua},
  journal={European Journal of Operational Research},
  volume={156},
  number={1},
  pages={261--266},
  year={2004},
  publisher={Elsevier}
}

@inproceedings{kleinberg1999fairness,
  title={Fairness in routing and load balancing},
  author={Kleinberg, Jon and Rabani, Yuval and Tardos, {\'E}va},
  booktitle={40th Annual Symposium on Foundations of Computer Science (Cat. No. 99CB37039)},
  pages={568--578},
  year={1999},
  organization={IEEE}
}

@inproceedings{10.5555/3491440.3491446,
author = {Aziz, Haris and Rey, Simon},
title = {Almost group envy-free allocation of indivisible goods and chores},
year = {2021},
isbn = {9780999241165},
articleno = {6},
numpages = {7},
location = {Yokohama, Yokohama, Japan},
series = {IJCAI'20}
}

@article{10.1145/3485006,
author = {Benabbou, Nawal and Chakraborty, Mithun and Igarashi, Ayumi and Zick, Yair},
title = {Finding Fair and Efficient Allocations for Matroid Rank Valuations},
year = {2021},
issue_date = {December 2021},
publisher = {Association for Computing Machinery},
address = {New York, NY, USA},
volume = {9},
number = {4},
issn = {2167-8375},
doi = {10.1145/3485006},
journal = {ACM Trans. Econ. Comput.},
month = oct,
articleno = {21},
numpages = {41},
}

@inproceedings{babaioff2021fair,
  title={Fair and truthful mechanisms for dichotomous valuations},
  author={Babaioff, Moshe and Ezra, Tomer and Feige, Uriel},
  booktitle={Proceedings of the AAAI Conference on Artificial Intelligence},
  volume={35},
  number={6},
  pages={5119--5126},
  year={2021}
}

@book{Oxley-10.1093/acprof:oso/9780198566946.001.0001,
    author = {Oxley, James},
    title = {Matroid Theory},
    publisher = {Oxford University Press},
    year = {2011},
    month = {02},
    isbn = {9780198566946},
    doi = {10.1093/acprof:oso/9780198566946.001.0001}
}

@book{golovin2005max,
  title={Max-min fair allocation of indivisible goods},
  author={Golovin, Daniel},
  year={2005},
  publisher={School of Computer Science, Carnegie Mellon University}
}

@inproceedings{bansal2006santa,
  title={The santa claus problem},
  author={Bansal, Nikhil and Sviridenko, Maxim},
  booktitle={Proceedings of the thirty-eighth annual ACM symposium on Theory of computing},
  pages={31--40},
  year={2006}
}

@article{Dani2005santa,
author = {Bez\'{a}kov\'{a}, Ivona and Dani, Varsha},
title = {Allocating indivisible goods},
year = {2005},
issue_date = {April 2005},
publisher = {Association for Computing Machinery},
address = {New York, NY, USA},
volume = {5},
number = {3},
url = {https://doi.org/10.1145/1120680.1120683},
doi = {10.1145/1120680.1120683},
journal = {SIGecom Exch.},
month = apr,
pages = {11–18},
numpages = {8},
keywords = {truthfulness, theory, game theory, fairness, economics, cake-cutting, approximation algorithms, algorithms}
}

@book{ibaraki1988resource,
  title={Resource allocation problems: algorithmic approaches},
  author={Ibaraki, Toshihide and Katoh, Naoki},
  year={1988},
  publisher={MIT press}
}

@article{Michaeli1977OnSN,
  title={On Some Nonlinear Knapsack Problems},
  author={Inga Michaeli and Moshe Asher Pollatschek},
  journal={Annals of discrete mathematics},
  year={1977},
  volume={1},
  pages={403-414},
  url={https://api.semanticscholar.org/CorpusID:117578440}
}

@article{EK-MCF,
author = {Edmonds, Jack and Karp, Richard M.},
title = {Theoretical Improvements in Algorithmic Efficiency for Network Flow Problems},
year = {1972},
issue_date = {April 1972},
publisher = {Association for Computing Machinery},
address = {New York, NY, USA},
volume = {19},
number = {2},
issn = {0004-5411},
url = {https://doi.org/10.1145/321694.321699},
doi = {10.1145/321694.321699},
journal = {J. ACM},
month = apr,
pages = {248–264},
numpages = {17}
}

\clearpage
\appendix

\section{All Criteria in Example~\ref{example:criteria} are SPD}\label{sect:spd-verify}

 \begin{claim}\label{claim:ES-GINI}
 Minimizing $\ES(\p)$ is equivalent to minimizing $\GINI(\p)$.
 \end{claim}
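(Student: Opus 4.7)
The plan is to express both $\ES(\p)$ and $\GINI(\p)$ in terms of the sorted profile $h^\uparrow_1 \leq \cdots \leq h^\uparrow_n$ and exhibit an affine relation between them whose additive term is constant over all max-USW allocations.

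First I would rewrite the envy sum purely over the sorted profile: for every unordered pair $\{i,j\}$ with $h_i \neq h_j$, the quantity $|h_j - h_i|$ appears once in $\ES(\p)$, so
\[
\ES(\p) \;=\; \sum_{1 \leq i < j \leq n} \bigl(h^\uparrow_j - h^\uparrow_i\bigr).
\]
Next I would swap the order of summation by asking, for each index $k$, with what net multiplicity does $h^\uparrow_k$ appear. It enters with coefficient $+1$ for each of the $k-1$ pairs in which it is the larger element, and with coefficient $-1$ for each of the $n-k$ pairs in which it is the smaller one, giving net coefficient $(k-1)-(n-k) = 2k - (n+1)$. Therefore
\[
\ES(\p) \;=\; \sum_{k=1}^{n} \bigl(2k - (n+1)\bigr)\, h^\uparrow_k \;=\; 2\,\GINI(\p) \;-\; (n+1)\sum_{k=1}^{n} h^\uparrow_k.
\]

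Finally, I would observe that $\sum_k h^\uparrow_k = \sum_i h_i$ is just the utilitarian social welfare, which by the standing convention of the paper (we always restrict to max-USW allocations) takes a fixed value independent of the allocation. Hence $\ES(\p) = 2\,\GINI(\p) - C$ for a constant $C$, and the two objectives induce the same ordering on profiles; in particular they share the same minimizers. No step of this argument looks difficult; the only subtlety to flag is the reliance on the max-USW restriction to turn the affine shift into a constant, which is exactly the convention already adopted in \Cref{sect:prel}.
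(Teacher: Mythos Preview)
Your proof is correct and is essentially identical to the paper's own argument: both rewrite $\ES(\p)$ as $\sum_{i<j}(h^\uparrow_j-h^\uparrow_i)$, collect the coefficient of each $h^\uparrow_k$ to obtain $\ES(\p)=2\,\GINI(\p)-(n+1)\sum_k h^\uparrow_k$, and then note that the last sum is fixed under the standing max-USW convention (the paper simply writes it as $m$). Your explicit remark that the equivalence relies on the max-USW restriction is a welcome clarification.
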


 \begin{proof}
 \begin{align*}
 	& {\sum}_{h_i<h_j}(h_j-h_i) = {\sum}_{i<j}(h^\uparrow_j-h^\uparrow_i) \\
 	=& {\sum}_{j}(h^\uparrow_j(j-1)) - {\sum}_{i}(h^\uparrow_i(n-i))\\
 	=& {\sum}_{i}(h^\uparrow_i(i-1-n+i))\\
 	=& 2{\sum}_{i}(i \cdot h^\uparrow_i) - (1+n){\sum}_{i}(h^\uparrow_i)\\
 	=& 2\GINI(\p) - (1+n)m.
 \end{align*}
 \end{proof}

 \begin{lemma}
 Consider two profiles $\p=(p_1,\ldots,p_n)$ and $\q=(q_1,\ldots,q_n)$, where
   $\q$ is more balanced that $\p$ (namely, there are $j,k\in [n]$ such that $p_j>q_j\geq q_k>p_k$ and $q_i=p_i$ for $i\in [n]\setminus \{j,k\}$).
 \begin{enumerate}
 \item $\NSW(\q) < \NSW(\p)$.
 \item $\Potential_\Phi(\q)<\Potential_\Phi(\p)$, where $\Phi(x)$ is a \textbf{strictly} convex function of $x$. 
     (Hence criteria $\sum_i \Phi(h_i)$ with strictly convex terms $\Phi(h_i)$, including $\entropy(\p)$, $\Congestion(\p)$, $\LexiMax(\p)$, and $\LexiMin(\p)$ are all SPD.)
 \item $\ES(\q) < \ES(\p)$. \\
    (Hence $\GINI(\q) < \GINI(\p)$ due to Claim~\ref{claim:ES-GINI}.)
 \end{enumerate}
 Therefore all criteria mentioned in \Cref{example:criteria} are SPD.
 \end{lemma}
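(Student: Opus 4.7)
The plan is to exploit that only coordinates $j$ and $k$ of $\p$ and $\q$ differ, together with the conservation $p_j+p_k = q_j+q_k$ inherent in a Pigou--Dalton transfer. I would pick $\lambda \in (0,1)$ with $q_j = \lambda p_j + (1-\lambda) p_k$; conservation then forces $q_k = (1-\lambda) p_j + \lambda p_k$, and the strict inequalities $p_k < q_k \leq q_j < p_j$ guarantee $\lambda \in [1/2, 1)$. This convex-combination representation will drive all three parts uniformly.

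For part~(2), strict convexity of $\Phi$ gives $\Phi(q_j) < \lambda\Phi(p_j) + (1-\lambda)\Phi(p_k)$ and $\Phi(q_k) < (1-\lambda)\Phi(p_j) + \lambda\Phi(p_k)$; summing these and noting that the remaining $n-2$ terms of $\Potential_\Phi$ are identical in $\p$ and $\q$ yields $\Potential_\Phi(\q) < \Potential_\Phi(\p)$. For part~(1), I would split on whether $p_k = 0$: if so, $q_k>0$ strictly increases the number of positive entries (and the set of positive indices of $\p$ is contained in that of $\q$), so the lexicographic $\NSW$ tie-breaker already gives $\NSW(\q)<\NSW(\p)$; otherwise all four values are positive and applying part~(2) with the strictly convex $\Phi(x) = -\ln x$ on $(0,\infty)$ yields $q_j q_k > p_j p_k$, hence $\prod_i q_i > \prod_i p_i$, i.e.\ $\NSW(\q) < \NSW(\p)$.

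For part~(3), I would rewrite $\ES(\p) = \sum_{\{i,i'\}} |p_i - p_{i'}|$ (summing over unordered pairs, which is the same form used in the proof of \Cref{claim:ES-GINI}). For each $i \notin \{j,k\}$, convexity of $x \mapsto |p_i - x|$ combined with the convex-combination representations of $q_j,q_k$ gives $|p_i - q_j| + |p_i - q_k| \leq |p_i - p_j| + |p_i - p_k|$ by applying Jensen to each summand. For the sole remaining pair $\{j,k\}$, $|q_j - q_k| = q_j - q_k < p_j - p_k = |p_j-p_k|$ strictly, because $q_j<p_j$ and $q_k>p_k$. Summing all contributions gives $\ES(\q) < \ES(\p)$, and then \Cref{claim:ES-GINI} (which is a linear relation once the total mass is fixed) immediately yields $\GINI(\q) < \GINI(\p)$ as well. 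The main obstacle I foresee is not any of the three inequalities themselves but justifying the conservation $p_j+p_k=q_j+q_k$: the lemma genuinely requires it (for a generic strictly convex $\Phi$ it fails without conservation, e.g.\ $\Phi(x)=x^2$ with $(p_j,p_k)=(10,0)$ versus $(q_j,q_k)=(9,8)$ makes $\Potential_\Phi$ increase), so the first line of the proof must invoke the classical ``transfer'' interpretation of Pigou--Dalton, which is natural here since any two allocation profiles under consideration share the same (maximum) USW.
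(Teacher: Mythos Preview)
Your argument is correct and, in several places, cleaner than the paper's own proof. The key organisational difference is that you set up the single convex--combination representation $q_j=\lambda p_j+(1-\lambda)p_k$, $q_k=(1-\lambda)p_j+\lambda p_k$ with $\lambda\in[1/2,1)$ and then run all three parts through it, whereas the paper treats each part by an ad hoc computation. For Part~2 the paper compares increments $\Phi(p_k+\Delta)-\Phi(p_k)$ versus $\Phi(q_j+\Delta)-\Phi(q_j)$; your two applications of strict Jensen are equivalent but more transparent. For Part~1 the paper does a bare algebraic expansion of $q_jq_k-p_jp_k$; your reduction to Part~2 with $\Phi=-\ln$ is slicker, and your explicit case split on $p_k=0$ (invoking the lexicographic tie--break in the definition of $\NSW$) actually patches a gap in the paper's computation, which silently assumes $p_k>0$ and otherwise yields only a non-strict inequality. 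For Part~3 the paper partitions the remaining agents into five zones according to where $p_i$ falls among $p_k<q_k\le q_j<p_j$ and tracks the change in each agent's ``unhappiness'' separately; your one--line use of the convexity of $x\mapsto|p_i-x|$ on the pair of terms $|p_i-q_j|+|p_i-q_k|$ collapses that entire case analysis. Finally, your closing remark about conservation $p_j+p_k=q_j+q_k$ is well taken: the paper's own proof opens with ``By definition, we assume $q_j=p_j-\Delta$ and $q_k=p_k+\Delta$'', i.e.\ it too relies on the transfer interpretation rather than the bare ``both $q_j,q_k$ lie in $(p_k,p_j)$'' formulation, and your counterexample with $\Phi(x)=x^2$ shows this is genuinely necessary.
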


 \begin{proof}
 By definition, we assume $q_j = p_j - \Delta$ and $q_k = p_k + \Delta$ where $\Delta>0$. Then we have $p_j-p_k\geq 2\Delta$.
	
 1.    \begin{align*}
 		& \NSW(\q)-\NSW(\p)  \\
 		=& -(q_jq_k-p_jp_k)\prod_{i}{p_i} \\
 		=& (p_jp_k-(p_j -\Delta)(p_k + \Delta))\prod_{i}{p_i}\\
 		=& (p_jp_k-p_jp_k - \Delta p_j + \Delta p_k + \Delta^2 )\prod_{i}{p_i}\\
 		=& (\Delta(\Delta-(p_j-p_k))\prod_{i}{p_i}\\
 		\leq& (\Delta(\Delta-2\Delta))\prod_{i}{p_i}\\
 		=&-(\Delta^2)\prod_{i}{p_i}<0.
 	\end{align*}
 	Thus we have $\NSW(\q)<\NSW(\p)$.
	
 \smallskip 2. \begin{align*}
 	& \Potential_\Phi(\q) - \Potential_\Phi(\p) \\
 	=& (\Phi(q_k) - \Phi(p_k)) + (\Phi(q_j) - \Phi(p_j))\\ 
 	=& (\Phi(p_k + \Delta) - \Phi(p_k)) - (\Phi(q_j+\Delta) - \Phi(q_j)).
 \end{align*}
 	Since $\Phi(x)$ is a strictly convex function of $x$, and  $q_j > p_k$, we have
         $$(\Phi(p_k + \Delta) - \Phi(p_k)) < (\Phi(q_j+\Delta) - \Phi(q_j)).$$
 	Thus, $$\Potential_\Phi(\q) < \Potential_\Phi(\p).$$


 \smallskip 3. Let $n_1,n_2,n_3,n_4$ and $n_5$ be the number of agents $i$ except $j,k$ with 
 	$p_i \leq p_k$, $p_k < p_i \leq q_k$,
 	$q_k < p_i < q_j$, $q_j \leq p_i < p_j$ and
 	$p_i \geq p_j$ respectively. 
	
 	Let $U(\p,i) = {\sum}_{\substack{j \\ h_i < h_j}}(h_j-h_i)$ denote the unhappiness of agent $i$. Then we have 
 	$$\ES(\p) = {\sum}_{i}U(\p,i),$$ 
 	$$\ES(\q) - \ES(\p) ={\sum}_{i}(U(\q,i)-U(\p,i)).$$
	
 	Observe the change from $\p$ to $\q$, we derive that:
 	\begin{itemize}
 		\item[(i)] For those agents $i$ with $p_i \leq p_k$ or $p_i \geq p_j$, their $U(\p,i)=U(\q,i)$s remain unchanged.
 		\item[(ii)] For a agent $i$ with $p_k < p_i \leq q_k$, its $U(\p,i)$ changes by $$U(\q,i)-U(\p,i) = (q_k-p_i + q_j-p_i) - (p_j-p_i) = q_k-p_i-\Delta \leq 0.$$
 		\item[(iii)] For those agents $i$ with $q_k < p_i < q_j$, their total sum of unhappiness is changed by $$\Delta_{3} = {\sum}_{{\substack{i \\ q_k < p_i < q_j}}}(U(\q,i)-U(\p,i))= -n_3\Delta.$$
 		\item[(iv)] For a agent $i$ with $q_j \leq p_i < p_j$, its unhappiness changes by $$U(\q,i)-U(\p,i) = -(p_j-p_i) \leq 0.$$
 		\item[(v)] For $j$, its unhappiness changes by $$\Delta_j = U(\q,j)-U(\p,j) \leq (n_4 + n_5) \Delta,$$ because of the decrease from $p_j$ to $q_j$.
 		\item[vi] For $k$, its unhappiness changes by $$\Delta_k = U(\q,k)-U(\p,k) \leq -(n_3 + n_4 + n_5) \Delta - 2 \Delta.$$ The $-(n_3 + n_4 + n_5)\Delta$ part is due to the increase from $p_k$ to $q_k$ and the remaining $-2 \Delta$ part is due to the increase from $p_k$ to $q_k$ and the decrease from $p_j$ to $q_j$.
 	\end{itemize}
 	Since case (i), (ii) and (iv) contribute non-positive terms in $\ES(\q) - \ES(\p)$, we may only consider (iii), (v) and (vi) and have:
 	\begin{align*}
 		& \ES(\q) - \ES(\p) \\
 		=&{\sum}_{i}(U(\q,i)-U(\p,i))\\
 		\leq& \Delta_{3} + \Delta_j + \Delta_k\\
 		\leq& -n_3\Delta + (n_4 + n_5) \Delta -(n_3 + n_4 + n_5) \Delta - 2 \Delta\\ 
 		=& -n_3\Delta-2\Delta < 0.
 	\end{align*}
	
 	Thus we have $\ES(\q) - \ES(\p) < 0$.
 \end{proof}

\section{Inconsistency for the Mixed Case}

 When some items are divisible and others are indivisible, the optimal allocations under different SPD criteria may differ. In other words, the consistency on optimums among all the SPD criteria for divisible and indivisible case, respectively, does not generalize to the mixed case. We show an example in the following. 
 Suppose there are 4 agents and 6 items. The first 4 items are indivisible whereas items $5,6$ are divisible. The matrix $a$ below represents the preference of agents: item $i$ can be allocated to agent $j$ if and only if $a_{i,j}=1$:
 $$
 a = 
 \left(
 \begin{array} {llll}
 1 & 0 & 0 & 0 \\
 0 & 1 & 0 & 0 \\
 0 & 0 & 1 & 0 \\
 1 & 0 & 0 & 1 \\
 0 & 1 & 1 & 1 \\
 0 & 1 & 1 & 1 
 \end{array}
 \right)
 $$

 We use a matrix $b$ to represent the allocation, where $b_{i,j}$ indicates the amount of item $i$ allocated to agent $j$. Note that $b_{i,j}=0$ if $a_{i,j}=0$ and the $i$-th row of matrix $b$ contains only one 1 if item $i$ is indivisible. The sum of each row of the matrix equals 1, and the sum of each column equals the amount of items obtained by each agent.

 In this example, all allocations can be divided into two classes depending on whether item $4$ is assigned to agent $1$ or agent $4$.
 It is not hard to verify:

 The optimal allocation for minimizing $\LexiMin$ is shown in $b_{\LexiMin}$ below, admitting profile $(2,\frac{4}{3},\frac{4}{3},\frac{4}{3})$. 

 The optimal allocation for minimizing $\LexiMax$ is shown in $b_{\LexiMax}$ below, admitting profile  $(1,\frac{5}{3},\frac{5}{3},\frac{5}{3})$. 

 $$
 b_{\LexiMin} = 
 \left(
 \begin{array} {llll}
 1 & 0 & 0 & 0 \\
 0 & 1 & 0 & 0 \\
 0 & 0 & 1 & 0 \\
 1 & 0 & 0 & 0 \\
 0 & \frac{1}{3} & 0 & \frac{2}{3} \\
 0 & 0 & \frac{1}{3} & \frac{2}{3} 
 \end{array} \right),  b_{\LexiMax} = 
 \left( 
 \begin{array} {llll}
 1 & 0 & 0 & 0 \\
 0 & 1 & 0 & 0 \\
 0 & 0 & 1 & 0 \\
 0 & 0 & 0 & 1 \\
 0 & \frac{1}{3} & \frac{1}{3} & \frac{1}{3} \\
 0 & \frac{1}{3} & \frac{1}{3} & \frac{1}{3} 
 \end{array}
 \right).
 $$

\end{document}